\newtheorem{theorem}{Theorem}
\newtheorem{definition}{Definition}
\newtheorem{lemma}{Lemma}
\newtheorem{proposition}{Proposition}
\newtheorem{corollary}{Corollary}
\DeclareMathOperator*{\argmax}{arg\,max}
\newcommand{\Tau}{\mathrm{T}}
\begin{document}
\title{Observable Perfect Equilibrium}
\author{Sam Ganzfried\\
Ganzfried Research\\
sam.ganzfried@gmail.com
}

\date{\vspace{-5ex}}

\maketitle

\begin{abstract}
While Nash equilibrium has emerged as the central game-theoretic solution concept, many important games contain several Nash equilibria and we must determine how to select between them in order to create real strategic agents. Several Nash equilibrium refinement concepts have been proposed and studied for sequential imperfect-information games, the most prominent being trembling-hand perfect equilibrium, quasi-perfect equilibrium, and recently one-sided quasi-perfect equilibrium. These concepts are robust to certain arbitrarily small mistakes, and are guaranteed to always exist; however, we argue that neither of these is the correct concept for developing strong agents in sequential games of imperfect information. We define a new equilibrium refinement concept for extensive-form games called observable perfect equilibrium in which the solution is robust over trembles in publicly-observable action probabilities (not necessarily over all action probabilities that may not be observable by opposing players). Observable perfect equilibrium correctly captures the assumption that the opponent is playing as rationally as possible given mistakes that have been observed (while previous solution concepts do not). We prove that observable perfect equilibrium is always guaranteed to exist, and demonstrate that it leads to a different solution than the prior extensive-form refinements in no-limit poker.  We expect observable perfect equilibrium to be a useful equilibrium refinement concept for modeling many important imperfect-information games of interest in artificial intelligence. 
\end{abstract}

\section{Introduction}
\label{se:intro}
When developing a strategy for a human or computer agent to play in a game, the starting point is typically
a Nash equilibrium. Even if additional information is available about the opponents, e.g., from historical data or observations of play, we would often still opt to start playing a Nash equilibrium strategy until we are confident in our ability to successfully exploit opponents by deviating~\cite{Ganzfried11:Game,Ganzfried15:Safe}. It is well known that several conceptual and computational limitations exist for Nash equilibrium. For multiplayer and two-player non-zero-sum games, it is PPAD-hard to compute or approximate one Nash equilibrium~\cite{Chen05:Nash,Chen06:Settling,Daskalakis09:Complexity,Rubinstein17:Settling}, different Nash equilibria may give different values to the players, and following a Nash equilibrium strategy provides no performance guarantee. Even for two-player zero-sum games, in which these issues do not arise, there can still exist multiple Nash equilibria that we must select from. Therefore several solution concepts that refine Nash equilibrium in various ways have been proposed to help select one that is more preferable in some way. Most of the common equilibrium refinements are based on the idea of ensuring robustness against certain arbitrarily small ``trembles'' in players' execution of a given strategy.  Variants of these Nash equilibrium refinements have been devised for simultaneous strategic-form games as well as sequential games of perfect and imperfect information. In this paper we will be primarily interested in sequential games of imperfect information, which are more complex than the other games classes and have received significant interest recently in artificial intelligence due to their ability to model many important scenarios. To simplify analysis we will primarily be studying a subclass of these games in which there are two players, only one player has private information, and both players take a single action; however, our results apply broadly to extensive-form imperfect-information games. We will also be primarily focused on two-player zero-sum games, though some analysis also applies to two-player non-zero-sum and multiplayer games. We will show that existing Nash equilibrium refinement concepts have limitations in sequential imperfect-information games, and propose the new concept of observable perfect equilibrium that addresses these limitations.

A \emph{strategic-form game} (aka \emph{normal-form game}) consists of a finite set of players $N = \{1,\ldots,n\}$, a finite set of pure strategies $S_i$ for each player $i \in N$, and a real-valued utility for each player for each strategy vector (aka \emph{strategy profile}), $u_i : \times_i S_i \rightarrow \mathbb{R}$. A \emph{mixed strategy} $\sigma_i$ for player $i$ is a probability distribution over pure strategies, where $\sigma_i(s_{i'})$ is the probability that player $i$ plays pure strategy $s_{i'} \in S_i$ under $\sigma_i$. Let $\Sigma_i$ denote the full set of mixed strategies for player $i$. A strategy profile $\sigma^* = (\sigma^*_1,\ldots,\sigma^*_n)$ is a \emph{Nash equilibrium} if $u_i(\sigma^*_i,\sigma^*_{-i}) \geq u_i(\sigma_i, \sigma^*_{-i})$ for all $\sigma_i \in \Sigma_i$ for all $i \in N$, where $\sigma^*_{-i} \in \Sigma_{-i}$ denotes the vector of the components of strategy $\sigma^*$ for all players excluding player $i$. Here $u_i$ denotes the expected utility for player $i$, and $\Sigma_{-i}$ denotes the set of strategy profiles for all players excluding player $i$. 

Nash equilibrium has emerged as the central solution concept in game theory, and is guaranteed to exist in all finite strategic-form games~\cite{Nash50:Eq,Nash51:Non}. However, games may contain multiple Nash equilibria and it is not clear which one should be played. A popular refinement of Nash equilibrium is \emph{trembling hand perfect equilibrium}~\cite{Selten75:Reexamination}. Given a strategic-form game $G$, define $G'$ to be a \emph{perturbed game} which is identical to $G$ except only totally mixed strategies (i.e., strategies that play all pure strategies with nonzero probability) can be played. A strategy profile $\sigma^*$ in $G$ is a trembling-hand perfect equilibrium (THPE) if there is a sequence of perturbed games that converges to $G$ in which there is a sequence of Nash equilibria of the perturbed games that converges to $\sigma^*.$ It has been shown that every finite strategic-form game has at least one THPE~\cite{Selten75:Reexamination}. The following result provides an alternative equivalent characterization of THPE~\cite{vanDamme87:Stability}:

\begin{theorem}
\label{th:thpe}
Let $\sigma^*$ be a strategy profile of an $n$-player strategic-form game $G$. Then $\sigma^*$ is a trembling-hand perfect equilibrium if and only if $\sigma^*$ is a limit point of a sequence $\{\sigma(\epsilon)\}_{\epsilon \rightarrow 0}$ of totally mixed strategy profiles with the property that $\sigma^*$ is a best response for all players against every element $\sigma(\epsilon)$ in this sequence. 
\end{theorem}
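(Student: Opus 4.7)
The plan is to prove both implications of the equivalence, working from Selten's perturbed-game definition of THPE to the best-response characterization in the forward direction and back in the reverse. The single observation that drives both halves is a complementary-slackness description of Nash equilibria of perturbed games: if $G'$ is specified by minimum-probability vectors $\eta_i: S_i \to (0,1)$ with $\sum_{s \in S_i}\eta_i(s) < 1$ for each $i$, then a totally mixed profile $\sigma$ is a Nash equilibrium of $G'$ if and only if, for every player $i$ and every pure strategy $s \in S_i$, $\sigma_i(s) > \eta_i(s)$ implies $s \in \operatorname{BR}_i(\sigma_{-i})$ in the underlying unperturbed game $G$. I would state and verify this lemma first, since it cleanly separates the ``no slack'' and ``best response'' cases for each pure strategy.

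For the forward direction, assume $\sigma^*$ is a THPE, so there exist perturbed games $G^k$ with minimum-probability vectors $\eta^k \to 0$ and Nash equilibria $\sigma^k$ of $G^k$ satisfying $\sigma^k \to \sigma^*$. Fix a player $i$ and a pure strategy $s \in \operatorname{supp}(\sigma^*_i)$. Then $\sigma^k_i(s) \to \sigma^*_i(s) > 0$ while $\eta^k_i(s) \to 0$, so $\sigma^k_i(s) > \eta^k_i(s)$ for all sufficiently large $k$, and the lemma forces $s \in \operatorname{BR}_i(\sigma^k_{-i})$ in $G$. Since the supports and $N$ are finite, a common tail makes every pure strategy in each $\operatorname{supp}(\sigma^*_i)$ a best response to $\sigma^k_{-i}$ simultaneously. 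A mixed strategy supported on pure best responses is itself a best response, so $\sigma^*$ is a best response for every player against each $\sigma^k$ along that tail; relabeling $\sigma(\epsilon) := \sigma^k$ yields the required sequence.

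For the reverse direction, suppose $\sigma(\epsilon) \to \sigma^*$ with each $\sigma(\epsilon)$ totally mixed and $\sigma^*$ a best response for every player against $\sigma(\epsilon)$. I would construct perturbed games by setting, for each player $i$ and pure strategy $s$, $\eta^\epsilon_i(s) := \sigma_i(\epsilon)(s)$ when $s \notin \operatorname{supp}(\sigma^*_i)$ and $\eta^\epsilon_i(s) := \delta_\epsilon$ otherwise, where $\delta_\epsilon > 0$ is chosen small enough that $\delta_\epsilon < \min_{i,\, s \in \operatorname{supp}(\sigma^*_i)} \sigma_i(\epsilon)(s)$ and $\sum_{s \in S_i} \eta^\epsilon_i(s) < 1$ for every $i$; both conditions are attainable for small $\epsilon$ because $\sigma_i(\epsilon)(s) \to \sigma^*_i(s) > 0$ on the support and $\to 0$ off it. By hypothesis, every pure strategy in $\operatorname{supp}(\sigma^*_i)$ is a best response to $\sigma_{-i}(\epsilon)$, while off the support the constraint $\sigma_i(\epsilon)(s) \geq \eta^\epsilon_i(s)$ binds with equality; the lemma then certifies $\sigma(\epsilon)$ as a Nash equilibrium of the perturbed game, and $\eta^\epsilon \to 0$ exhibits $\sigma^*$ as a THPE.

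The main obstacle is the reverse-direction construction. The perturbation vector must be strictly positive on every pure strategy and sum to strictly less than one, yet it must pin the off-support components of $\sigma(\epsilon)$ exactly to their current values while leaving the on-support components free to take the (generally much larger) equilibrium masses supplied by the hypothesis. The choice of $\delta_\epsilon$ above threads this needle, but one has to verify the uniform bounds across all players as $\epsilon \to 0$; these follow from $\sigma(\epsilon) \to \sigma^*$ and the decay of the off-support probabilities. The forward direction is essentially immediate once the complementary-slackness lemma is in place.
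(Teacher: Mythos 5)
Your argument is correct, and it is essentially the standard proof of this classical result: the paper itself does not prove Theorem~\ref{th:thpe} but cites it from van Damme, and the complementary-slackness lemma you isolate (in a perturbed game, any pure strategy receiving more than its minimum probability must be a best response in the underlying game) is exactly the device used in the textbook treatment, with the forward direction following from $\eta^k \to 0$ on the support of $\sigma^*$ and the reverse direction from pinning off-support probabilities to their current values. One small touch-up in the reverse direction: the two conditions you impose on $\delta_\epsilon$ (below the minimum on-support probability of $\sigma_i(\epsilon)$ and summability below one) do not by themselves force $\eta^\epsilon \to 0$ on the support coordinates, since the on-support probabilities converge to positive limits; you should additionally require, say, $\delta_\epsilon \leq \epsilon$, so that the perturbed games genuinely converge to $G$. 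With that trivial fix the proof is complete.
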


While the strategic form can be used to model simultaneous actions, settings with sequential moves are typically modelled using the \emph{extensive form} representation. The extensive form can also model simultaneous actions, as well as chance events and imperfect information (i.e., situations where some information is available to only some of the agents and not to others). Extensive-form games consist primarily of a game tree; each non-terminal node has an associated player (possibly \emph{chance}) that makes the decision at that node, and each terminal node has associated utilities for the players.  Additionally, game states are partitioned into \emph{information sets}, where the player whose turn it is to move cannot distinguish among the states in the same information set.  Therefore, in any given information set, a player must choose actions with the same distribution at each state contained in the information set. If no player forgets information that he previously knew, we say that the game has \emph{perfect recall}. A (behavioral) \emph{strategy} for player $i,$ $\sigma_i \in \Sigma_i,$ is a function that assigns a probability distribution over all actions at each information set belonging to $i$. Similarly to strategic-form games, it can be shown that all extensive-form games with perfect recall contain at least one Nash equilibrium in mixed strategies. The concept of \emph{extensive-form trembling hand perfect equilibrium} (EFTHPE) is defined analogously to THPE by requiring that every action at every information set for each player is taken with nonzero probability in perturbed games. EFTHPE are then limits of equilibria of such perturbed games as the tremble probabilities go to zero. It has been proven that an EFTHPE exists in every extensive-form game~\cite{vanDamme87:Stability}.

\clearpage
\section{Observable Perfect Equilibrium}
\label{se:ope}
In order to simplify our analysis we define a subset of extensive-form information games called \emph{two-player one-step extensive-form imperfect-information games} (OSEFGs):
\begin{itemize}
\item There are two players, P1 and P2.
\item Player 1 is dealt private information $\tau_1$ from a finite set $\Tau_1$ uniformly at random.\footnote{Note that all of our analysis will still hold if we assume that $\tau_1$ is selected from an arbitrary probability distribution.}
\item Player 1 can then choose action $a_1$ from finite set $A_1.$
\item Player 2 observes the action $a_1$ but not $\tau_1.$
\item Player 2 then chooses action $a_2$ from finite set $A_2.$
\item Both players are then given payoff $u_i(\tau_1,a_1,a_2).$ 
\end{itemize}

For mixed strategy $\sigma_1$ for player 1, $\sigma_1(\tau_1,a_1)$ denotes the probability that player 1 takes action $a_1 \in A_1$ with private information $\tau_1 \in \Tau_1.$ Similarly for mixed strategy $\sigma_2$ for player 2, $\sigma_2(a_1,a_2)$ denotes the probability that player 2 takes action $a_2 \in A_2$ following observed action $a_1 \in A_1$ of player 1. 

Suppose we are in the position of player 2 responding to the observed action $a_1 \in A_1.$ If both players are following a Nash equilibrium strategy, then we know that we are best responding to player 1's strategy. However, suppose that we are following our component from Nash equilibrium strategy profile $\sigma^*$ in which $\sigma^*_1(\tau_1,a_1) = 0$ for all $\tau_1 \in \Tau_1.$  Our observation is clearly inconsistent with player 1 following $\sigma^*,$ since they would never choose action $a_1.$ Since our strategy is part of a Nash equilibrium, it ensures that player 1 cannot profitably deviate from $\sigma^*_1$ with any $\tau_1 \in \Tau_1$ and take $a_1$; however, there may be many such strategies, and we would like to choose the best one given that we have actually observed player 1 irrationally playing $a_1.$ In this situation, playing an EFTHPE strategy may ensure that we play a stronger strategy against this opponent, who has selected an action that they should not rationally play, since EFTHPE explicitly ensures robustness against the possibility of ``trembling'' and playing such an action with small probability. 

Extensive-form trembling-hand perfect equilibrium assumes that all players take all actions at all information sets with nonzero probability. In the situation described above, we know that player 1 is taking $a_1$ at some information set with nonzero probability; however, we really have no further information beyond that. It is very possible that they are playing a strategy that takes $a'_1 \in A_1$ with zero probability with all $\tau_1 \in \Tau_1.$ The core assumption of game theory is that, in the absence of any information the contrary, we assume that all players are behaving rationally. Now clearly that assumption is violated in this case when we observe player 1 irrationally playing $a_1.$ However, it seems a bit extreme to now assume that all players are playing all actions with nonzero probability. If we assume that the opponent is playing \emph{as rationally as possible given our observations}, then we would only consider trembles that are consistent with our observations of their play. Such trembles must satisfy $\sigma_1(\tau_1,a_1) > 0$ for at least one $\tau_1 \in \Tau_1$, or alternatively, $\sum_{\tau_1 \in \Tau_1}\sigma_1(\tau_1,a_1) > 0.$ The concept of \emph{observable perfect equilibrium} (OPE) captures this assumption that all players are playing as rationally as possible subject to the constraint that their play is consistent with our observations.

\begin{definition}
\label{de:ope}
Let $G$ be a two-player one-step extensive-form imperfect-information game, and suppose that player 2 has observed public action $a_1$ from player 1. Then $\sigma^*$ is an \emph{observable perfect equilibrium} if there is a sequence of perturbed games, in which player 1 is required to play $a_1$ with nonzero probability for at least one $\tau_1 \in \Tau_1$, that converges to $G$, in which there is a sequence of Nash equilibria of the perturbed games that converges to $\sigma^*.$
\end{definition}

\begin{proposition}
Every observable perfect equilibrium is a Nash equilibrium.
\end{proposition}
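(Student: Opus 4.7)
The plan is to mimic the standard argument that every trembling-hand perfect equilibrium is a Nash equilibrium, adapted to the weaker perturbation scheme of Definition~\ref{de:ope}. Let $\sigma^*$ be an observable perfect equilibrium, and fix a witnessing sequence of perturbed games $G^{(k)}$ converging to $G$ together with Nash equilibria $\sigma^{(k)}$ of $G^{(k)}$ with $\sigma^{(k)} \to \sigma^*$. Since the payoffs $u_i$ are multilinear, hence continuous, in the mixed strategy profile, the idea is that any strict profitable deviation from $\sigma^*$ in $G$ would still be strictly profitable when evaluated against $\sigma^{(k)}$ for large $k$, contradicting the Nash property of $\sigma^{(k)}$ in $G^{(k)}$. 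I would argue by contradiction: suppose some player $i$ has a strategy $\sigma_i'$ with $u_i(\sigma_i', \sigma^*_{-i}) > u_i(\sigma^*_i, \sigma^*_{-i}) + 3\delta$ for some $\delta > 0$, and derive a contradiction.

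I would split into two cases according to the identity of the deviating player. For $i = 2$, the perturbation in Definition~\ref{de:ope} imposes no constraint on player 2's strategy space: the full set $\Sigma_2$ is feasible in every $G^{(k)}$. Thus $\sigma_2'$ itself is feasible in $G^{(k)}$, and by continuity $u_2(\sigma_2', \sigma^{(k)}_1) > u_2(\sigma^{(k)}_2, \sigma^{(k)}_1) + \delta$ for all sufficiently large $k$, contradicting the equilibrium property of $\sigma^{(k)}$. For $i = 1$, player 1's strategy in $G^{(k)}$ must satisfy $\sum_{\tau_1} \sigma_1(\tau_1, a_1) \geq \epsilon_k$ for some $\epsilon_k \to 0$. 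If $\sigma_1'$ already satisfies this constraint for large $k$, the same continuity argument finishes the case; otherwise I would replace $\sigma_1'$ by a convex combination $\tilde\sigma_1^{(k)} = (1-\lambda_k)\sigma_1' + \lambda_k \hat\sigma_1$, where $\hat\sigma_1$ is any strategy that plays $a_1$ with total probability $1$ and $\lambda_k \to 0$ fast enough so that $\tilde\sigma_1^{(k)}$ is feasible in $G^{(k)}$. Multilinearity of $u_1$ ensures $u_1(\tilde\sigma_1^{(k)}, \sigma^{(k)}_2) \to u_1(\sigma_1', \sigma^*_2)$, again yielding a profitable deviation in $G^{(k)}$ for large $k$ and contradicting the Nash property of $\sigma^{(k)}$.

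The main obstacle is the player-1 subcase where the candidate deviation $\sigma_1'$ lies on the boundary excluded by the perturbation, since unlike the player-2 case we cannot simply reuse $\sigma_1'$ as a feasible strategy in $G^{(k)}$. The proposed fix—convex-combining with a strategy that trivially meets the constraint and letting the mixing weight vanish with $\epsilon_k$—is routine but must be set up carefully so that the approximating strategy is simultaneously feasible in $G^{(k)}$ and close enough to $\sigma_1'$ in utility. Beyond this, the argument relies only on continuity of expected utility and the definition of Nash equilibrium in each $G^{(k)}$, both of which are immediate.
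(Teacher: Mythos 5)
Your proof is correct and follows the same overall strategy as the paper's: argue by contradiction, split on which player deviates, and use continuity (multilinearity) of expected utility to transfer a strict improvement against $\sigma^*$ into a strict improvement against the perturbed equilibria $\sigma^{(k)}$ for large $k$; the player-2 case is handled identically, since player 2's strategy space is unconstrained in the perturbed games of Definition~\ref{de:ope}. The one step where you genuinely diverge is the player-1 feasibility issue. The paper resolves it by re-indexing: writing $\epsilon^* = \sum_{\tau_1}\sigma_1(\tau_1,a_1)$ for the deviation $\sigma_1$, it notes that $\sigma_1$ is already feasible in the perturbed game $G_{\epsilon^*}$ when $\epsilon^* < \epsilon'$ and derives the contradiction there; this is shorter but implicitly assumes $\epsilon^* > 0$, so it does not cover a deviation that never plays $a_1$ at all. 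You instead approximate the deviation by the mixture $(1-\lambda_k)\sigma_1' + \lambda_k\hat\sigma_1$ with $\hat\sigma_1$ playing $a_1$ surely and $\lambda_k \to 0$ chosen at least as large as the tremble bound, which is feasible in $G^{(k)}$ by construction and, by multilinearity, eventually preserves the strict improvement. Your variant is slightly longer but covers the boundary case $\sum_{\tau_1}\sigma_1'(\tau_1,a_1)=0$ that the paper's argument glosses over; the only care needed is the point you already flag, namely that the perturbed games' constraints be of the lower-bound form $\sum_{\tau_1}\sigma_1(\tau_1,a_1)\ge\epsilon_k$ so that feasibility of the mixture is immediate (the same assumption is implicit in the paper's proof).
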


\begin{proof}
Let $\sigma^*$ be an observable perfect equilibrium of $G$, where $a_1$ is the observed public action of player 1. Then there exists a sequence of games $\{G_{\epsilon}\}$ converging to $G$ in which player 1 is forced to play $a_1$ with nonzero probability for at least one $\tau_1 \in \Tau_1$, and a sequence of Nash equilibria $\{(\sigma^{\epsilon}_1,\sigma^{\epsilon}_2)\}$ that converges to $\sigma^*.$ Suppose that $\sigma^*$ is not a Nash equilibrium of the original game $G.$ Suppose that player 2 can profitably deviate to $\sigma_2.$ Then we have
$$u_2(\sigma^*_1,\sigma_2) > u_2(\sigma^*_1,\sigma^*_2)$$
$$u_2\left(\lim_{\epsilon \rightarrow 0}\sigma^{\epsilon}_1,\sigma_2\right) > u_2\left(\lim_{\epsilon \rightarrow 0}\sigma^{\epsilon}_1,\lim_{\epsilon \rightarrow 0}\sigma^{\epsilon}_2\right)$$
By continuity of expected utility,
$$\lim_{\epsilon \rightarrow 0} \left[u_2\left(\sigma^{\epsilon}_1,\sigma_2\right) - u_2\left(\sigma^{\epsilon}_1,\sigma^{\epsilon}_2\right)\right] > 0$$
So there exists some $\epsilon > 0$ for which 
$$u_2\left(\sigma^{\epsilon}_1,\sigma_2\right) > u_2\left(\sigma^{\epsilon}_1,\sigma^{\epsilon}_2\right)$$
This contradicts the fact that $\sigma^{\epsilon}$ is a Nash equilibrium of $G_{\epsilon}$. 

Now suppose player 1 can profitably deviate to \(\sigma_1\). By continuity of expected utility, there exists \(\epsilon' > 0\)
such that, for all \(\epsilon \in (0,\epsilon']\),
\[
u_1(\sigma_1,\sigma_2^\epsilon)
>
u_1(\sigma_1^\epsilon,\sigma_2^\epsilon).
\]

If
\[
\sum_{\tau_1 \in T_1}\sigma_1(\tau_1,a_1)>0,
\]
then let
\[
\epsilon^*
=
\min\left\{
\epsilon',
\sum_{\tau_1 \in T_1}\sigma_1(\tau_1,a_1)
\right\}.
\]
Then \(\sigma_1\) is feasible in \(G_{\epsilon^*}\), and it is a profitable
deviation from \(\sigma_1^{\epsilon^*}\), contradicting the fact that
\(\sigma^{\epsilon^*}\) is a Nash equilibrium of \(G_{\epsilon^*}\).

It remains to consider the case
\[
\sum_{\tau_1 \in T_1}\sigma_1(\tau_1,a_1)=0.
\]
By continuity, there exists a strategy \(\widehat{\sigma}_1\) arbitrarily close to
\(\sigma_1\) and a value \(\bar{\epsilon} > 0\) such that, for all
\(\epsilon \in (0,\bar{\epsilon}]\),
\[
u_1(\widehat{\sigma}_1,\sigma_2^\epsilon)
>
u_1(\sigma_1^\epsilon,\sigma_2^\epsilon).
\]
Choose \(\widehat{\sigma}_1\) sufficiently close to \(\sigma_1\) so that
\[
0 <
\sum_{\tau_1 \in T_1}\widehat{\sigma}_1(\tau_1,a_1)
\le \bar{\epsilon}.
\]
Let
\[
\epsilon^*
=
\sum_{\tau_1 \in T_1}\widehat{\sigma}_1(\tau_1,a_1).
\]
Then \(\widehat{\sigma}_1\) is feasible in \(G_{\epsilon^*}\), and it is a
profitable deviation from \(\sigma_1^{\epsilon^*}\), again contradicting the
fact that \(\sigma^{\epsilon^*}\) is a Nash equilibrium of \(G_{\epsilon^*}\).

So we have shown that neither player can profitably deviate from $\sigma^*$, and therefore $\sigma^*$ is a Nash equilibrium.
\end{proof}

We can extend Definition~\ref{de:ope} to general $n$-player extensive-form imperfect-information games by adding analogous constraints for all observed actions (i.e., by requiring that the sum of the probabilities of strategy sequences taken consistent with our observations is nonzero). This is useful because we are no longer required to reason about trembles that are incompatible with the current path of play, which are irrelevant at this point. For general extensive-form games, there is a further consideration about what trembles should be considered for future moves beyond the current path of play. Extensive-form trembling hand perfect equilibrium assumes that all players may tremble in future actions, while an alternative concept called \emph{quasi-perfect equilibrium} (QPE) assumes that only the opposing players tremble for future actions (even if we have trembled previously ourselves)~\cite{vanDamme84:Relation}. The related concept of \emph{one-sided quasi-perfect equilibrium} (OSQPE) assumes that only the opposing players can tremble at all and we cannot~\cite{Farina21:Equilibrium}. OSQPE is the most computationally efficient and is also the most similar to observable perfect equilibrium.

In two-player one-step extensive-form imperfect-information games QPE and EFTHPE are identical, since both players only take a single action along the path of play. They both require that both player 1 and player 2 put nonzero probability trembles on all actions. Both of these potentially differ from OSQPE, which from the perspective of player 2 requires that player 1 puts nonzero probability trembles on all possible actions, while player 2 does not have this requirement. Note that all three of these concepts potentially differ from observable perfect equilibrium (from the perspective of player 2), which requires only that player 1 makes a nonzero probability tremble at some information set consistent with taking the observed action $a_1$ and no requirement on player 2.

Observable perfect equilibrium is fundamentally different from other equilibrium refinements in that it is dependent on the action taken by the other player. For each observed action $a_1$ we compute a potentially different strategy for ourselves, conditionally on having observed $a_1.$ In contrast, all other equilibrium refinement concepts compute a full strategy profile for all players in advance of game play. If play contains longer sequences of actions than just a single move for each player, we can recompute our OPE strategy after each new observation (again by assuming positive probability trembles for all actions consistent with the path of play). Note that in aggregate over all information sets an OPE still defines a full strategy; we just do not need to compute it in entirety to implement it.

We can view the relation between OPE and other solution concepts analogously to the relation between endgame solving~\cite{Ganzfried15:Endgame} and standard offline game solving in large imperfect-information games. Previously the standard approach for approximating Nash equilibrium strategies in large imperfect-information games was to first apply an abstraction algorithm to create a significantly smaller game that is strategically similar to the full game~\cite{Shi01:Abstraction,Billings03:Approximating,Gilpin06:Competitive,Gilpin07:Better,Gilpin08:Heads-up,Waugh09:Practical,Johanson13:Evaluating}, then solve the abstract game using an equilibrium-finding algorithm such as counterfactual regret minimization (CFR)~\cite{Zinkevich07:Regret} or a generalization of Nesterov's excessive gap technique~\cite{Hoda10:Smoothing}. With endgame solving~\cite{Ganzfried15:Endgame}, the portion of the game tree that we have reached is solved in real time to a finer degree of granularity than in the offline abstract equilibrium computation. This focused computation led to superhuman play in two-player~\cite{Brown17:Superhuman,Moravcik17b:DeepStack} and six-player no-limit Texas hold 'em~\cite{Brown19:Superhuman}. OPE similarly achieves computational advantages over other equilibrium refinement concepts such as THPE, QPE, and OSQPE, by focusing computation in real time only on the portion of the game tree we have reached (and observed an opponent's ``tremble''). However, unlike endgame solving, OPE still guarantees that the computed strategy remains a Nash equilibrium, while it has been shown that endgame solving may produce strategies that are not Nash equilibria of the full game (even if the trunk strategy were an exact Nash equilibrium)~\cite{Ganzfried15:Endgame}. 

In order to show existence of observable perfect equilibrium and an algorithm for its computation, we first review results for the related concept of one-sided quasi-perfect equilibrium. One-sided quasi-perfect equilibrium assumes that the game is two-player zero-sum and that we play the role of the ``machine player'' while the opponent is the ``human player.'' A key subroutine in the computation of one-sided quasi-perfect equilibrium in two-player zero-sum games is the solution to an optimization formulation for an $\epsilon$-quasi-perfect equilibrium strategy profile (where the trembles for the human player are lower bounded by $\boldsymbol{\ell}_h(\epsilon)$). The vector $\boldsymbol{\ell}_h(\epsilon)$ has entries $\boldsymbol{\ell}_h(\epsilon)[\sigma] = \epsilon^{|\sigma|}$, where $|\sigma|$ denotes the number of actions for player $h$ in the sequence $\sigma.$ In the OSEFG setting there is just a single action per player per sequence, so we have $\ell_h(\epsilon) = \epsilon.$ It has been shown that the optimization formulation (\ref{eq:OSQPE}) corresponds to a one-sided quasi-perfect equilibrium~\cite{Farina21:Equilibrium}. In (\ref{eq:OSQPE}), $\mathbf{x}_m$ is the vector of the machine player's strategy sequences, $\mathbf{x}_h$ is the vector of the human player's strategy sequences, and $\mathbf{A}_m$ is the payoff matrix from the machine player's perspective. The vector $\mathbf{f}_h$ and matrix $\mathbf{F}_h$ are constants that encode the sequence-form representation of the information set structure for the human player, and $\mathbf{f}_m,\mathbf{F}_m$ are analogous for the machine player~\cite{Koller94:Fast}. 

\begin{equation} \label{eq:OSQPE}
\begin{array}{rrl} 
&\max_{\mathbf{x}_m} \min_{\mathbf{x}_h} & \mathbf{x}^T_m \mathbf{A}_m \mathbf{x}_h \\
&\mbox{s.t.}& \mathbf{F}_m \mathbf{x}_m =  \mathbf{f}_m \\
& & \mathbf{x}_m \geq \mathbf{0} \\
& & \mathbf{F}_h  \mathbf{x}_h =  \mathbf{f}_h \\
& & \mathbf{x}_h \geq \boldsymbol{\ell}_h(\epsilon) \\
\end{array}
\end{equation}

The following result has been proven from this formulation~\cite{Farina21:Equilibrium}.
\begin{lemma}
Consider the bilinear saddle point problem (\ref{eq:OSQPE}). Then, for any $\epsilon > 0$ for which the domain of the minimization problem is nonempty,
any solution to (\ref{eq:OSQPE}) is a one-sided $\epsilon$-quasi-perfect strategy profile.
\label{le:OSQPE}
\end{lemma}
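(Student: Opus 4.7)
The plan is to treat (\ref{eq:OSQPE}) as a standard zero-sum bilinear saddle-point problem on two sequence-form polytopes, apply the minimax theorem, and then read off the best-response conditions that define a one-sided $\epsilon$-quasi-perfect strategy profile. First I would verify that both feasible sets are nonempty, convex, and compact: the machine's constraints $\mathbf{F}_m \mathbf{x}_m = \mathbf{f}_m$, $\mathbf{x}_m \geq \mathbf{0}$ carve out the usual (bounded) sequence-form polytope, and by the hypothesis of the lemma the perturbed human set $\{\mathbf{x}_h : \mathbf{F}_h \mathbf{x}_h = \mathbf{f}_h,\ \mathbf{x}_h \geq \mathbf{\ell}_h(\epsilon)\}$ is nonempty, and is likewise bounded because each sequence weight is at most one. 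Since the objective $\mathbf{x}_m^T \mathbf{A}_m \mathbf{x}_h$ is bilinear, von Neumann's minimax theorem applies, so a saddle point $(\mathbf{x}_m^*, \mathbf{x}_h^*)$ exists and any optimizer of the outer maximization pairs with an optimizer of the inner minimization to form such a saddle point.

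Next I would translate the saddle-point inequalities directly into the equilibrium conditions. By the definition of a saddle point we have
$$(\mathbf{x}_m^*)^T \mathbf{A}_m \mathbf{x}_h^* \geq \mathbf{x}_m^T \mathbf{A}_m \mathbf{x}_h^*$$
for every $\mathbf{x}_m$ with $\mathbf{F}_m \mathbf{x}_m = \mathbf{f}_m$, $\mathbf{x}_m \geq \mathbf{0}$, which says $\mathbf{x}_m^*$ is an unconstrained sequence-form best response to $\mathbf{x}_h^*$; and
$$(\mathbf{x}_m^*)^T \mathbf{A}_m \mathbf{x}_h^* \leq (\mathbf{x}_m^*)^T \mathbf{A}_m \mathbf{x}_h$$
for every $\mathbf{x}_h$ with $\mathbf{F}_h \mathbf{x}_h = \mathbf{f}_h$, $\mathbf{x}_h \geq \mathbf{\ell}_h(\epsilon)$, which says that, among all trembled strategies respecting the lower bound $\mathbf{\ell}_h(\epsilon)$, the strategy $\mathbf{x}_h^*$ minimizes the machine's payoff against $\mathbf{x}_m^*$. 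Because the game is two-player zero-sum, minimizing the machine's payoff is equivalent to maximizing the human's payoff, so $\mathbf{x}_h^*$ is a best response for the human player within the class of $\epsilon$-trembled strategies. These two conditions are exactly the definition of a one-sided $\epsilon$-quasi-perfect strategy profile: the machine plays an unconstrained best response while the human plays optimally subject to the mandated per-sequence trembles $\mathbf{\ell}_h(\epsilon)$.

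The main obstacle is bookkeeping rather than a real technical difficulty. One must check that the sequence-form equalities $\mathbf{F}_h \mathbf{x}_h = \mathbf{f}_h$ together with the bound $\mathbf{x}_h \geq \mathbf{\ell}_h(\epsilon)$ really correspond, via Koller et al.'s sequence-form parameterization, to behavioral strategies that play every action with a positive probability decaying polynomially in $\epsilon$, and that the bilinear objective evaluated on such $(\mathbf{x}_m, \mathbf{x}_h)$ really coincides with the expected machine payoff under the induced behavioral strategies. Once these identifications are in place, the equivalence between the saddle-point of (\ref{eq:OSQPE}) and a one-sided $\epsilon$-quasi-perfect strategy profile follows immediately from the minimax argument above, with no further computation needed.
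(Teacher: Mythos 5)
First, a point of comparison: the paper does not prove this lemma at all --- it is imported verbatim from Farina et al.~\cite{Farina21:Equilibrium} (``The following result has been proven from this formulation''), so there is no in-paper argument to match; the relevant benchmark is the proof in that reference.

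Your minimax step is fine: both feasible sets are nonempty (for the human polytope, by the hypothesis of the lemma), convex and compact, the objective is bilinear, so a saddle point $(\mathbf{x}_m^*,\mathbf{x}_h^*)$ exists and the two saddle-point inequalities you write down are correct. The genuine gap is the sentence asserting that ``these two conditions are exactly the definition of a one-sided $\epsilon$-quasi-perfect strategy profile.'' They are not. One-sided $\epsilon$-quasi-perfection is defined, following van Damme's notion of quasi-perfection, by \emph{information-set-level} (sequential) rationality: at every information set of the machine player the continuation strategy must maximize the machine's conditional expected utility against the trembling human strategy (assuming the machine itself does not tremble in the future), and analogously for the human subject to the lower bounds $\mathbf{\ell}_h(\epsilon)$. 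What your saddle-point inequalities deliver is only \emph{ex-ante} best response in sequence form. Ex-ante optimality does not by itself imply optimality at information sets that a player's own strategy reaches with probability zero: behavior there has no effect on the bilinear objective, so a solution of (\ref{eq:OSQPE}) could in principle prescribe play at such information sets that must be shown to satisfy the quasi-perfect best-response conditions. Bridging exactly this gap is the substantive content of the lemma in \cite{Farina21:Equilibrium}: one exploits the strict positivity of the human's realization plan forced by $\mathbf{x}_h \geq \mathbf{\ell}_h(\epsilon) > \mathbf{0}$, together with LP duality and complementary slackness for the sequence-form program, to convert ex-ante optimality into the required per-information-set optimality, taking care at information sets excluded by a player's own earlier actions --- precisely where quasi-perfection differs from ordinary (normal-form) best response. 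Your closing paragraph dismisses the sequence-form-to-behavioral translation as bookkeeping, but this translation plus the per-information-set argument is the heart of the proof; as written, your argument shows that the saddle point is a Nash equilibrium of the perturbed game, not that it is a one-sided $\epsilon$-quasi-perfect strategy profile.
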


From Lemma~\ref{le:OSQPE} they are able to prove the existence of one-sided quasi-perfect equilibrium as a corollary~\cite{Farina21:Equilibrium}.

\begin{corollary}
Every two-player zero-sum extensive-form game with perfect recall has at least one one-sided quasi-perfect equilibrium.
\label{co:OSQPE}
\end{corollary}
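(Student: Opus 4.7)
The plan is to deduce the corollary from Lemma~\ref{le:OSQPE} by a standard compactness argument on a sequence of $\epsilon$-one-sided quasi-perfect strategy profiles as $\epsilon \to 0$. First I would argue that for all sufficiently small $\epsilon > 0$ the feasible region of the minimization in (\ref{eq:OSQPE}) is nonempty. The sequence-form polytope $\{\mathbf{x}_h : \mathbf{F}_h \mathbf{x}_h = \mathbf{f}_h,\ \mathbf{x}_h \geq \mathbf{0}\}$ encodes the realization plans of behavioral strategies for the human player; perfect recall guarantees that behavioral strategies correspond bijectively to sequence-form strategies. Picking the totally mixed behavioral strategy that plays every action at every information set with positive probability yields a sequence-form vector whose entries for a sequence $\sigma$ are products of at least $|\sigma|$ strictly positive action probabilities, hence bounded below by some $c^{|\sigma|} > 0$. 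Therefore, for every $\epsilon < c$, this vector satisfies $\mathbf{x}_h \geq \boldsymbol{\ell}_h(\epsilon)$, so the feasible region is nonempty.

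Second, for each such $\epsilon$ I would invoke Lemma~\ref{le:OSQPE} to obtain a solution $(\mathbf{x}^\epsilon_m, \mathbf{x}^\epsilon_h)$ of (\ref{eq:OSQPE}) which is a one-sided $\epsilon$-quasi-perfect strategy profile. The existence of such a saddle point for fixed $\epsilon$ follows from the fact that both feasible sets are nonempty compact convex polytopes and the objective $\mathbf{x}^T_m \mathbf{A}_m \mathbf{x}_h$ is bilinear, so Sion's (or von Neumann's) minimax theorem applies and an optimal pair exists.

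Third, I would take any sequence $\epsilon_k \downarrow 0$ and consider the corresponding sequence of profiles $\{(\mathbf{x}^{\epsilon_k}_m, \mathbf{x}^{\epsilon_k}_h)\}$. Both sequences lie in the compact sequence-form polytopes (the lower bound $\boldsymbol{\ell}_h(\epsilon_k) \to \mathbf{0}$ only shrinks toward the full polytope), so by Bolzano--Weierstrass a subsequence converges to some $(\mathbf{x}^*_m, \mathbf{x}^*_h)$. By Definition~\ref{de:ope}-style reasoning applied to one-sided quasi-perfect equilibrium --- that is, a one-sided QPE is by construction a limit point of one-sided $\epsilon$-quasi-perfect strategy profiles --- this limit is a one-sided quasi-perfect equilibrium of the original game.

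The main obstacle is the first step: verifying that $\boldsymbol{\ell}_h(\epsilon)$ is compatible with the sequence-form equality constraints for small $\epsilon$, since $\boldsymbol{\ell}_h(\epsilon)$ has different exponents $\epsilon^{|\sigma|}$ on different coordinates and one must check that these componentwise lower bounds can be simultaneously met by a legitimate realization plan rather than an arbitrary nonnegative vector. The perfect-recall assumption is precisely what makes this go through, because it lets us build the required lower bound from a single totally mixed behavioral strategy whose induced realization probabilities factor along sequences. Once that is in hand, the rest of the argument is routine compactness plus the defining property of one-sided quasi-perfect equilibrium given by Lemma~\ref{le:OSQPE}.
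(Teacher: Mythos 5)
Your proposal is correct and follows essentially the same route as the paper, which obtains the corollary exactly as you do: take solutions of the trembling formulation (\ref{eq:OSQPE}) for $\epsilon \rightarrow 0^+$ (guaranteed to be one-sided $\epsilon$-quasi-perfect profiles by Lemma~\ref{le:OSQPE}), and pass to a limit point, with feasibility for small $\epsilon$ and compactness of the sequence-form polytopes doing the work. The paper itself defers these details to the cited work of Farina et al., so your write-up simply makes explicit the standard steps (totally mixed behavioral strategy giving feasibility under perfect recall, existence of a saddle point for fixed $\epsilon$, Bolzano--Weierstrass) that the paper leaves implicit.
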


We can obtain analogous results for observable perfect equilibrium using similar reasoning. We refer to the players as we did originally, where player 2 corresponds to ourselves (i.e., the ``machine player''), and player 1 corresponds to the opponent (i.e., the ``human player''). Note that the optimization formulation (\ref{eq:OPE}) differs from (\ref{eq:OSQPE}) only in the constraints for player 1's strategy vector. The inequality $\mathbf{c}^T \mathbf{x}_1 \geq \epsilon$ encodes the constraint that the sum of the probabilities that player 1 takes strategy sequences that are consistent with our observations of play so far is at least $\epsilon,$ where $\mathbf{c}$ is a constant vector.

\begin{equation} \label{eq:OPE}
\begin{array}{rrl} 
&\max_{\mathbf{x}_2} \min_{\mathbf{x}_1} & \mathbf{x}^T_2 \mathbf{A}_2 \mathbf{x}_1 \\
&\mbox{s.t.}& \mathbf{F}_2 \mathbf{x}_2 =  \mathbf{f}_2 \\
& & \mathbf{x}_2 \geq \mathbf{0} \\
& & \mathbf{F}_1  \mathbf{x}_1 =  \mathbf{f}_1 \\
& & \mathbf{x}_1 \geq \mathbf{0} \\
& & \mathbf{c}^T \mathbf{x}_1 \geq \epsilon \\
\end{array}
\end{equation}

\begin{lemma}
Consider the bilinear saddle point problem (\ref{eq:OPE}). Then, for any $\epsilon > 0$ for which the domain of the minimization problem is nonempty,
any solution to (\ref{eq:OPE}) is an $\epsilon$-observable-perfect strategy profile.
\label{le:OPE}
\end{lemma}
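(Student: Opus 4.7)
The plan is to mirror the proof of Lemma~\ref{le:OSQPE} for one-sided quasi-perfect equilibrium, adapting it to the one additional linear constraint $\mathbf{c}^T \mathbf{x}_1 \geq \epsilon$ in (\ref{eq:OPE}). The goal is to show that any saddle point of (\ref{eq:OPE}) corresponds exactly to a Nash equilibrium of the perturbed game $G_\epsilon$ in Definition~\ref{de:ope}, which is by definition an $\epsilon$-observable-perfect strategy profile.

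First, I would verify that the feasible regions in (\ref{eq:OPE}) coincide with the behavioral-strategy sets of the two players in $G_\epsilon$. The equations $\mathbf{F}_i \mathbf{x}_i = \mathbf{f}_i$ together with $\mathbf{x}_i \geq \mathbf{0}$ are the standard sequence-form description of behavioral mixed strategies under perfect recall (Koller et al.), and in an OSEFG they describe exactly the strategy polytopes $\Sigma_1$ and $\Sigma_2$. I would then write $\mathbf{c}$ explicitly as the $\{0,1\}$-indicator of player 1's sequences that end in the observed action $a_1$, so that for the behavioral strategy $\sigma_1$ induced by $\mathbf{x}_1$ one has $\mathbf{c}^T \mathbf{x}_1 = \sum_{\tau_1 \in \Tau_1} \sigma_1(\tau_1, a_1)$. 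Thus $\mathbf{c}^T \mathbf{x}_1 \geq \epsilon$ is precisely the perturbation constraint defining $G_\epsilon$, so the feasible set of $\mathbf{x}_1$ is exactly player 1's strategy set in $G_\epsilon$, while the feasible set of $\mathbf{x}_2$ is unchanged from $G$.

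Second, I would identify the objective with expected utility. By the construction of the sequence-form payoff matrix, $\mathbf{x}_2^T \mathbf{A}_2 \mathbf{x}_1 = u_2(\sigma_1, \sigma_2)$ for the induced behavioral strategies. Both feasible sets are non-empty (for $\epsilon$ at most the largest feasible value of $\mathbf{c}^T \mathbf{x}_1$), compact, and convex, and the objective is bilinear, so the minimax theorem gives $\max_{\mathbf{x}_2} \min_{\mathbf{x}_1} = \min_{\mathbf{x}_1} \max_{\mathbf{x}_2}$, and any solution $(\mathbf{x}^*_2, \mathbf{x}^*_1)$ is a saddle point. A saddle point of a two-player zero-sum game over these strategy sets is, by the usual equivalence, a Nash equilibrium of the restricted game — which is exactly $G_\epsilon$. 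By Definition~\ref{de:ope}, such a Nash equilibrium is an $\epsilon$-observable-perfect strategy profile, completing the argument.

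The step I expect to require the most care is the identification of $\mathbf{c}^T \mathbf{x}_1$ with $\sum_{\tau_1} \sigma_1(\tau_1, a_1)$: this is where the specific structure of the OPE perturbation diverges from the entrywise $\mathbf{x}_h \geq \boldsymbol{\ell}_h(\epsilon)$ used in OSQPE, and one must check that the single aggregated linear inequality still carves out precisely the strategies admissible in $G_\epsilon$, neither more nor less. Once that dictionary between the sequence-form constraint and the behavioral-strategy perturbation is in hand, everything else is standard sequence-form bookkeeping and a direct invocation of minimax, paralleling the proof of Lemma~\ref{le:OSQPE}.
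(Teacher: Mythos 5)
Your proposal is correct and matches the argument the paper intends: the paper gives no explicit proof of Lemma~\ref{le:OPE}, deferring to ``similar reasoning'' as in the cited proof of Lemma~\ref{le:OSQPE}, and your write-up supplies exactly that reasoning — identifying the feasible set of $\mathbf{x}_1$ with player 1's strategy polytope in the perturbed game $G_\epsilon$ via $\mathbf{c}^T\mathbf{x}_1=\sum_{\tau_1}\sigma_1(\tau_1,a_1)\geq\epsilon$, identifying the bilinear objective with expected utility, and invoking the minimax/saddle-point equivalence to conclude that a solution is a Nash equilibrium of $G_\epsilon$, i.e., an $\epsilon$-observable-perfect profile. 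The only caveat is that the paper never formally defines ``$\epsilon$-observable-perfect strategy profile,'' so your reading of it as a Nash equilibrium of $G_\epsilon$ is an (entirely reasonable) interpretive step consistent with Definition~\ref{de:ope} and with the paper's later use of $G_\epsilon$ in the clairvoyance-game analysis.
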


\begin{corollary}
Every two-player zero-sum extensive-form game with perfect recall has at least one observable perfect equilibrium.
\label{co:OPE}
\end{corollary}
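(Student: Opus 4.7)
The plan is to derive this corollary from Lemma~\ref{le:OPE} in exactly the same manner that Corollary~\ref{co:OSQPE} is derived from Lemma~\ref{le:OSQPE}. First I would verify that there exists some $\epsilon_0 > 0$ such that for every $\epsilon \in (0, \epsilon_0]$ the feasible region of the minimization problem in (\ref{eq:OPE}) is non-empty. Because $a_1$ is an action actually available to player~1, there is at least one sequence-form vector $\mathbf{x}_1$ satisfying $\mathbf{F}_1 \mathbf{x}_1 = \mathbf{f}_1$, $\mathbf{x}_1 \geq \mathbf{0}$, and $\mathbf{c}^T \mathbf{x}_1 > 0$; mixing this vector with any other feasible sequence-form strategy lets us realize any sufficiently small positive value of $\mathbf{c}^T \mathbf{x}_1$, so the constraint $\mathbf{c}^T \mathbf{x}_1 \geq \epsilon$ is consistent with the sequence-form constraints whenever $\epsilon$ is small enough.

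Next, for each such $\epsilon$, Lemma~\ref{le:OPE} produces a solution $(\mathbf{x}_2^{\epsilon}, \mathbf{x}_1^{\epsilon})$ to (\ref{eq:OPE}), which corresponds to an $\epsilon$-observable-perfect strategy profile $\sigma^{\epsilon}$ --- equivalently, a Nash equilibrium of the perturbed game $G_{\epsilon}$ appearing in Definition~\ref{de:ope}. I would then pick any sequence $\epsilon_n \downarrow 0$ with $\epsilon_n \leq \epsilon_0$. The profiles $\{\sigma^{\epsilon_n}\}$ all lie in the compact product of the two players' sequence-form polytopes, so by compactness we can extract a subsequence $\{\sigma^{\epsilon_{n_k}}\}$ converging to some strategy profile $\sigma^*$.

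By construction the perturbed games $G_{\epsilon_{n_k}}$ converge to $G$ as $\epsilon_{n_k} \to 0$ (their only difference from $G$ is the lower bound $\mathbf{c}^T \mathbf{x}_1 \geq \epsilon_{n_k}$, which collapses to $\mathbf{c}^T \mathbf{x}_1 \geq 0$), and each $\sigma^{\epsilon_{n_k}}$ is a Nash equilibrium of $G_{\epsilon_{n_k}}$. Thus $\sigma^*$ matches Definition~\ref{de:ope} exactly and is an observable perfect equilibrium of $G$.

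The main obstacle is the preliminary feasibility check: one must ensure that the additional constraint $\mathbf{c}^T \mathbf{x}_1 \geq \epsilon$ is compatible with the sequence-form constraints for a range of $\epsilon$ values accumulating at $0$. This is a mild realizability condition reflecting that the observed action $a_1$ is legally available to player~1; if it fails, the observation itself is impossible under any legal strategy and Definition~\ref{de:ope} is vacuous. Once feasibility is established, the rest is a standard compactness-and-limit argument, and existence of an observable perfect equilibrium follows immediately from Lemma~\ref{le:OPE}.
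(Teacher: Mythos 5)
Your proof is correct and follows essentially the same route the paper intends: the paper derives Corollary~\ref{co:OPE} from Lemma~\ref{le:OPE} exactly as Corollary~\ref{co:OSQPE} is derived from Lemma~\ref{le:OSQPE}, i.e., taking limit points of $\epsilon$-observable-perfect profiles (equivalently, equilibria of the perturbed games of Definition~\ref{de:ope}) as $\epsilon \rightarrow 0^+$. Your added feasibility check and the compactness argument are the right (implicit) ingredients of that argument.
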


For one-sided quasi-perfect equilibrium, it has been shown that formulation (\ref{eq:OSQPE}) implies the following linear programming formulation (\ref{eq:OSQPE-2}) for an $\epsilon$-quasi-perfect equilibrium strategy for the machine player~\cite{Farina21:Equilibrium}. In (\ref{eq:OSQPE-2}), $\mathbf{v}$ is a new vector of free variables, while all other quantities are the same as in (\ref{eq:OSQPE}).

\begin{equation} \label{eq:OSQPE-2}
\begin{array}{rrl} 
&\argmax_{\mathbf{x}_m,\mathbf{v}} & (\mathbf{A}_m \boldsymbol{\ell}_h(\epsilon))^T \mathbf{x}_m + (\mathbf{f}_h - \mathbf{F}_h \boldsymbol{\ell}_h(\epsilon))^T \mathbf{v} \\
&\mbox{s.t.}& \mathbf{A}^T_m \mathbf{x}_m - \mathbf{F}_h \mathbf{v} \geq \mathbf{0} \\ 
& &\mathbf{F}_m \mathbf{x}_m =  \mathbf{f}_m \\
& & \mathbf{x_m} \geq \mathbf{0} \\
& & \mathbf{v} \mbox{ free} \\
\end{array}
\end{equation}

\begin{proposition}
\label{pr:osqpe}
Any limit point of solutions to the trembling linear program (\ref{eq:OSQPE-2}) as the trembling magnitude $\epsilon \rightarrow 0^+$ is a one-sided quasi-perfect equilibrium strategy for the machine player.
\end{proposition}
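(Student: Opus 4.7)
The plan is to show that for every fixed $\epsilon>0$, program (\ref{eq:OSQPE-2}) produces (as its $\mathbf{x}_m$ component) the machine-player component of a solution to the bilinear saddle-point problem (\ref{eq:OSQPE}), and then to pass to the limit using Lemma~\ref{le:OSQPE} together with a compactness argument. The connection between the two programs is LP duality applied to the inner minimization.

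First I would fix $\epsilon>0$ and look at the inner $\min_{\mathbf{x}_h}$ in (\ref{eq:OSQPE}) for a fixed $\mathbf{x}_m$. Introducing the change of variables $\mathbf{y}_h := \mathbf{x}_h - \boldsymbol{\ell}_h(\epsilon)$ turns the constraints into $\mathbf{F}_h \mathbf{y}_h = \mathbf{f}_h - \mathbf{F}_h \boldsymbol{\ell}_h(\epsilon)$ and $\mathbf{y}_h \geq \mathbf{0}$, while the objective becomes $(\mathbf{A}_m \boldsymbol{\ell}_h(\epsilon))^T \mathbf{x}_m + \mathbf{x}_m^T \mathbf{A}_m \mathbf{y}_h$. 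Taking the LP dual of this standard-form minimization (with a free dual variable $\mathbf{v}$ for the equality constraint) produces dual objective $(\mathbf{f}_h - \mathbf{F}_h \boldsymbol{\ell}_h(\epsilon))^T \mathbf{v}$ and dual-feasibility inequality $\mathbf{A}_m^T \mathbf{x}_m - \mathbf{F}_h \mathbf{v} \geq \mathbf{0}$. Since the primal is feasible (by the hypothesis that the domain of the minimization is nonempty), strong LP duality applies, so replacing the inner $\min$ by its dual $\max$ turns (\ref{eq:OSQPE}) into a single joint maximization in $(\mathbf{x}_m, \mathbf{v})$ — namely, exactly (\ref{eq:OSQPE-2}) after adding the constant term $(\mathbf{A}_m \boldsymbol{\ell}_h(\epsilon))^T \mathbf{x}_m$. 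Consequently, if $(\mathbf{x}_m^\epsilon, \mathbf{v}^\epsilon)$ solves (\ref{eq:OSQPE-2}) and $\mathbf{x}_h^\epsilon$ is any optimal response in the inner minimization of (\ref{eq:OSQPE}), then $(\mathbf{x}_m^\epsilon, \mathbf{x}_h^\epsilon)$ is a solution to (\ref{eq:OSQPE}), and by Lemma~\ref{le:OSQPE} is a one-sided $\epsilon$-quasi-perfect strategy profile.

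Next, I would pass to the limit. Sequence-form strategy vectors live in a compact polytope, so any sequence $\{\mathbf{x}_m^{\epsilon_k}\}$ with $\epsilon_k \downarrow 0$ has a convergent subsequence; refining once more, the paired human responses $\{\mathbf{x}_h^{\epsilon_k}\}$ also converge. The limit profile $(\mathbf{x}_m^*, \mathbf{x}_h^*)$ is, by the defining property of one-sided quasi-perfect equilibrium as a limit of $\epsilon$-quasi-perfect profiles, a one-sided quasi-perfect equilibrium. In particular the machine-player component $\mathbf{x}_m^*$ of any limit point is a one-sided quasi-perfect equilibrium strategy, which is the desired conclusion.

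The main obstacle is simply executing the LP-duality step cleanly: keeping the sign and transpose conventions of the sequence-form matrices $\mathbf{F}_h, \mathbf{A}_m$ consistent with those used in~\cite{Farina21:Equilibrium} so that the dual really matches (\ref{eq:OSQPE-2}) term by term (including the affine constant $(\mathbf{A}_m \boldsymbol{\ell}_h(\epsilon))^T \mathbf{x}_m$, which is why that term appears explicitly in the objective of (\ref{eq:OSQPE-2})). Once the dualization is verified, the remainder of the argument is a direct application of Lemma~\ref{le:OSQPE} plus compactness, and requires no further game-theoretic input.
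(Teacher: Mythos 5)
Your proof takes essentially the same route as the paper: the paper presents Proposition~\ref{pr:osqpe} as a result imported from Farina et al., and its own proof of the analogous Proposition~\ref{pr:ope} consists precisely of your key step---dualize the inner minimization of the saddle-point program (\ref{eq:OSQPE}) and substitute the dual into the outer maximization to obtain (\ref{eq:OSQPE-2})---with the appeal to Lemma~\ref{le:OSQPE} and the limit $\epsilon \rightarrow 0^+$ left implicit. One small precision: to invoke Lemma~\ref{le:OSQPE} you should pair $\mathbf{x}_m^{\epsilon}$ with a minimax-optimal strategy of the constrained min player (which exists by compactness of the lower-bounded sequence-form polytope), not merely an arbitrary best response to $\mathbf{x}_m^{\epsilon}$, since a best response to a maximin strategy need not complete a saddle point; this does not affect the machine-player component or the conclusion.
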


This linear programming formulation leads to a polynomial-time algorithm for computation of one-sided quasi-perfect equilibrium by solving the problem for consecutively smaller values of $\epsilon$ until a termination criterion is met~\cite{Farina21:Equilibrium}. It is argued that this computation is more efficient than analogous algorithms for extensive-form trembling-hand perfect equilibrium and quasi-perfect equilibrium, because the OSQPE formulation only depends on $\epsilon$ through the objective, while EFTHPE depends on $\epsilon$ through the left-hand-side of constraints, and QPE depends on $\epsilon$ in both the right-hand-side of constraints and the objective. We now provide a linear program formulation for \(\epsilon\)-observable-perfect equilibrium that is analogous to (\ref{eq:OSQPE-2}), which also implies a polynomial-time algorithm for computation of observable perfect equilibrium with the same computational advantages as for OSQPE. Note that in our new LP formulation (\ref{eq:OPE-2}) \(\epsilon\) occurs only in the objective, as it did in (\ref{eq:OSQPE-2}). Note also that our objective omits the term involving the machine-player strategy vector that appears in the corresponding OSQPE objective, replacing the tremble contribution by the single scalar term \(\epsilon w\), which suggests that OPE can be computed faster than OSQPE. This agrees with the intuition described above, since OPE only considers trembles for the opponent consistent with our observations of the path of play, while OSQPE considers all possible past and future trembles for the opponent.

\begin{equation}
\label{eq:OPE-2}
\begin{aligned}
\operatorname*{arg\,max}_{\mathbf{x}_2,w,\mathbf{v}}\quad
& \mathbf{f}_1^T \mathbf{v} + \epsilon w \\
\text{s.t.}\quad
& \mathbf{F}_1^T \mathbf{v} + w\mathbf{c} \le \mathbf{A}_2^T \mathbf{x}_2, \\
& \mathbf{F}_2\mathbf{x}_2 = \mathbf{f}_2, \\
& \mathbf{x}_2 \ge \mathbf{0}, \\
& w \ge 0, \\
& \mathbf{v} \text{ free.}
\end{aligned}
\end{equation}

\begin{proposition}
Any limit point of solutions to the trembling linear program \((\ref{eq:OPE-2})\), as the
trembling magnitude \(\epsilon\) goes to zero, is an observable perfect
equilibrium strategy for player 2.
\end{proposition}

\begin{proof}
Fix \(\epsilon > 0\) such that the feasible region of the inner minimization
problem in \((\ref{eq:OPE})\) is nonempty. For a fixed player-2 sequence-form strategy
\(\mathbf{x}_2\), the inner minimization problem in \((\ref{eq:OPE})\) is
\begin{equation*}
\begin{aligned}
\min_{\mathbf{x}_1}\quad
& \mathbf{x}_2^T \mathbf{A}_2 \mathbf{x}_1 \\
\text{s.t.}\quad
& \mathbf{F}_1\mathbf{x}_1 = \mathbf{f}_1, \\
& \mathbf{c}^T \mathbf{x}_1 \ge \epsilon, \\
& \mathbf{x}_1 \ge \mathbf{0}.
\end{aligned}
\end{equation*}
Since
\(\mathbf{x}_2^T \mathbf{A}_2 \mathbf{x}_1
= (\mathbf{A}_2^T \mathbf{x}_2)^T \mathbf{x}_1\), this is equivalently
\begin{equation*}
\begin{aligned}
\min_{\mathbf{x}_1}\quad
& (\mathbf{A}_2^T \mathbf{x}_2)^T \mathbf{x}_1 \\
\text{s.t.}\quad
& \mathbf{F}_1\mathbf{x}_1 = \mathbf{f}_1, \\
& \mathbf{c}^T \mathbf{x}_1 \ge \epsilon, \\
& \mathbf{x}_1 \ge \mathbf{0}.
\end{aligned}
\end{equation*}
The dual of this linear program is
\begin{equation}
\label{eq:OPE-dual}
\begin{aligned}
\max_{\mathbf{v},w}\quad
& \mathbf{f}_1^T \mathbf{v} + \epsilon w \\
\text{s.t.}\quad
& \mathbf{F}_1^T \mathbf{v} + w\mathbf{c} \le \mathbf{A}_2^T \mathbf{x}_2, \\
& w \ge 0, \\
& \mathbf{v} \text{ free.}
\end{aligned}
\end{equation}
The dual constraint is an inequality because the primal variable
\(\mathbf{x}_1\) is constrained to be nonnegative. Since the primal feasible
region is nonempty and contained in the sequence-form strategy space, the
primal is feasible and bounded. Thus strong duality applies.

Substituting the dual problem \((\ref{eq:OPE-dual})\) for the inner minimization problem in
\((\ref{eq:OPE})\), the problem of computing a player-2 strategy for the perturbed
observable-perfect problem is equivalent to the linear program
\begin{equation*}
\begin{aligned}
\operatorname*{arg\,max}_{\mathbf{x}_2,w,\mathbf{v}}\quad
& \mathbf{f}_1^T \mathbf{v} + \epsilon w \\
\text{s.t.}\quad
& \mathbf{F}_1^T \mathbf{v} + w\mathbf{c} \le \mathbf{A}_2^T \mathbf{x}_2, \\
& \mathbf{F}_2\mathbf{x}_2 = \mathbf{f}_2, \\
& \mathbf{x}_2 \ge \mathbf{0}, \\
& w \ge 0, \\
& \mathbf{v} \text{ free.}
\end{aligned}
\end{equation*}
This is precisely the trembling linear program \((\ref{eq:OPE-2})\). Hence any
solution \(\mathbf{x}_2^\epsilon\) of \((\ref{eq:OPE-2})\) is a player-2 solution of the
perturbed observable-perfect problem \((\ref{eq:OPE})\).

By Lemma \ref{le:OPE}, for every \(\epsilon > 0\) for which the feasible region of the
inner minimization problem is nonempty, any solution to \((\ref{eq:OPE})\) is an
\(\epsilon\)-observable-perfect strategy profile. Thus each \(\mathbf{x}_2^\epsilon\)
obtained from \((\ref{eq:OPE-2})\) is the player-2 component of an
\(\epsilon\)-observable-perfect strategy profile.

Now let \(\{\epsilon_k\}_{k=1}^{\infty}\) be any sequence with
\(\epsilon_k > 0\) for all \(k\) and \(\epsilon_k \to 0\). For each \(k\), let
\(\mathbf{x}_2^{\epsilon_k}\) be a solution of \((\ref{eq:OPE-2})\) with trembling
magnitude \(\epsilon_k\). Since the player-2 sequence-form strategy space is
compact, the sequence \(\{\mathbf{x}_2^{\epsilon_k}\}_{k=1}^{\infty}\) has a
convergent subsequence. Let \(\mathbf{x}_2^*\) be any limit point of this
sequence. Since each \(\mathbf{x}_2^{\epsilon_k}\) is the player-2 component
of an \(\epsilon_k\)-observable-perfect strategy profile and
\(\epsilon_k \to 0\), the limit point \(\mathbf{x}_2^*\) is an observable
perfect equilibrium strategy for player 2.
\end{proof}

\section{No-limit poker}
\label{se:poker}
In this section we illustrate how observable perfect equilibrium leads to a different strategy profile than the other equilibrium refinement concepts in no-limit poker. Poker has been a major AI challenge problem in recent years, with no-limit Texas hold 'em in particular being the most popular variant for humans. No-limit Texas hold 'em is a large sequential game of imperfect information, and just recently computers have achieved superhuman performance, first in the two-player variant~\cite{Moravcik17b:DeepStack,Brown17:Superhuman} and subsequently for six players~\cite{Brown19:Superhuman}. These agents attempt to compute approximations of Nash equilibrium strategies by first running an \emph{abstraction algorithm} to create a smaller strategically-similar game, and then solving the abstract game using an equilibrium-finding algorithm such as counterfactual regret minimization. Counterfactual regret minimization (CFR) is an iterative self-play procedure that has been proven to converge to Nash equilibrium in two-player zero-sum~\cite{Zinkevich07:Regret}, though it has been demonstrated to not converge to Nash equilibrium in a simplified three-player poker game~\cite{Abou10:Using}. The key insight that led to superhuman play was to combine these abstraction and equilibrium-finding approaches with endgame solving~\cite{Ganzfried15:Endgame}, in which the portion of the game we have reached during a hand is resolved in a finer granularity in real time. It is somewhat remarkable that these approaches have achieved such strong performance despite numerous theoretical limitations: the abstraction algorithms have no performance guarantee, endgame solving has no performance guarantee, CFR does not guarantee convergence to Nash equilibrium for more than two players, and furthermore even if CFR did converge for more than two players there can be multiple Nash equilibria and following one has no performance guarantee. It turns out that even ignoring all of these theoretical limitations, there is an additional challenge present. Even if we are in the two-player zero-sum setting and are able to compute an exact Nash equilibrium, the game may contain many Nash equilibria, and we would like to choose the ``best'' one. As we will see, even the simplest two-player no-limit poker game contains infinitely many Nash equilibria. 

In the \emph{no-limit clairvoyance game}~\cite{Ankenman06:Mathematics}, player 1 is dealt a \emph{winning hand} (W) and a \emph{losing hand} (L) each with probability $\frac{1}{2}.$ (While player 2 is not explicitly dealt a ``hand,'' we can view player 2 as always being dealt a medium-strength hand that wins against a losing hand and loses to a winning hand.) Both players have initial chip stacks of size $n$, and they both ante \$0.50 (creating an initial pot of \$1). P1 is allowed to bet any integral amount $x \in [0,n]$ (a bet of 0 is called a \emph{check}).\footnote{In the original formulation of the no-limit clairvoyance game~\cite{Ankenman06:Mathematics} player 1 is allowed to bet any real value in $[0,n]$, making the game a continuous game, since player 1's pure strategy space is infinite. For simplicity we consider the discrete game where player 1 is restricted to only betting integer values, though much of our equilibrium analysis will still apply for the continuous version as well.} Then P2 is allowed to call or fold (but not raise). This game clearly falls into the class of one-step extensive-form imperfect-information games. The game is small enough that its solution can be computed analytically (even for the continuous version)~\cite{Ankenman06:Mathematics}.

\begin{itemize}
\item P1 bets $n$ with prob. 1 with a winning hand.
\item P1 bets $n$ with prob. $\frac{n}{1+n}$ with a losing hand (and checks otherwise).
\item For all $x \in (0,n],$ P2 calls a bet of size $x$ with prob. $\frac{1}{1+x}$.
\end{itemize}

It was shown by Ankenman and Chen~\cite{Ankenman06:Mathematics} that this strategy profile constitutes a Nash equilibrium. (They also show that these frequencies are optimal in many other poker variants.) Here is a sketch of that argument. 

\begin{proposition}
\label{pr:clair_ne}
The strategy profile presented above is a Nash equilibrium of the no-limit clairvoyance game.
\end{proposition}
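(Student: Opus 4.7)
The plan is to verify directly that neither player can profitably deviate from the prescribed profile, checking best-response at every information set (both on-path and off-path), and exploiting the many indifferences that the stated mixing frequencies are engineered to produce.

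First I would verify player 2's strategy. The only on-path information set for P2 is after observing a bet of size $n$, so I would apply Bayes' rule to P1's bet frequencies ($1$ with W and $\tfrac{n}{1+n}$ with L) to obtain P2's posterior on $\{W,L\}$, then compare the expected net payoff of calling the $n$-bet against the ante-cost of folding. The calculation is designed to yield exact equality, so P2 is indifferent and the prescribed call probability $\tfrac{1}{1+n}$ is a valid best response. For any off-path bet $x\in(0,n)$ the same algebraic identity shows that calling with probability $\tfrac{1}{1+x}$ is also a best response against whatever posterior P2 might hold, since call and fold again produce equal expected payoffs.

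Next I would verify player 1's strategy against these call frequencies. For each bet size $x\in[0,n]$ and each hand type I would write down P1's expected net profit. For a winning hand the expression should simplify to a strictly increasing function of $x$, forcing the unique best bet size to be $x=n$ and matching the prescribed pure strategy. For a losing hand the same computation should collapse to a constant equal to the check payoff of $-0.50$ for every $x>0$, so P1 is indifferent across every action with L; in particular the prescribed bluff mixture (bet $n$ with probability $\tfrac{n}{1+n}$, check otherwise) is a best response.

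There is no deep obstacle here beyond careful bookkeeping of antes and signs. The one conceptual point worth emphasizing is that the off-path calling frequencies $\tfrac{1}{1+x}$ for $x\in(0,n)$ are chosen precisely to eliminate any profitable bet-size deviation for P1 when bluffing, while simultaneously leaving $x=n$ uniquely optimal with a winning hand; once both indifferences are verified, neither player can improve and the profile is a Nash equilibrium.
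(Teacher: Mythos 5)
Your proposal is correct and takes essentially the same route as the paper: use the calling frequency $\tfrac{1}{1+x}$ to make P1's bluff at any size $x$ worth exactly the checking payoff of $-0.5$, use the bluff-to-value ratio to make P2 indifferent at the bet she actually faces, and observe that P1's value of betting $x$ with a winning hand is strictly increasing in $x$, so $x=n$ is optimal. One small correction: for off-path sizes $x\in(0,n)$, calling with probability $\tfrac{1}{1+x}$ is \emph{not} a best response ``against whatever posterior P2 might hold'' (against a posterior concentrated on W, folding is strictly better); rather, those information sets are reached with probability zero under P1's strategy, so P2's choices there cannot affect her payoff and the Nash condition is satisfied trivially---the substantive role of these off-path frequencies is only to deter P1's deviations, which your P1 verification already covers.
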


\begin{proof}
Player 2 must call a bet of size $x$ with probability $\frac{1}{1+x}$ in order to make player 1 indifferent between betting $x$ and checking with a losing hand. For a given $x,$ player 1 must bluff $\frac{x}{1+x}$ as often as he value bets for player 2 to be indifferent between calling and folding. Given these values, the expected payoff to player 1 of betting size $x$ is $v(x) = \frac{x}{2(1+x)}.$ This function is monotonically increasing, and therefore player 1 will maximize his payoff with $x = n$.
\end{proof}

Despite the simplicity of this game, the solution has been used in order to interpret bet sizes for the opponent that fall outside an abstracted game model by many strong agents for full no-limit Texas hold 'em~\cite{Ganzfried13:Action,Ganzfried17:Reflections,Jackson13:Slumbot}. Thus, its solution still captures important aspects of realistic forms of poker played competitively.

It turns out that player 2 does not need to call a bet of size $x \neq n$ with exact probability $\frac{1}{1+x}$: he need only not call with such an extreme probability that player 1 has an incentive to change his bet size to $x$ (with either a winning or losing hand). In particular, it can be shown that player 2 need only call a bet of size $x$ with any probability (which can be different for different values of $x$) in the interval $\left[\frac{1}{1+x}, \min \left\{\frac{n}{x(1+n)},1\right\} \right]$ in order to remain in equilibrium.\footnote{Note that if $x$ is required to be integral then we always have $x \geq \frac{n}{1+n}$, and $\min \left\{\frac{n}{x(1+n)},1\right\}  = \frac{n}{x(1+n)}$. However, our solution also holds for the continuous game.}

\begin{proposition}
A strategy profile $\sigma^*$ in the no-limit clairvoyance game is a Nash equilibrium if and only if under $\sigma^*$ player 1 bets $n$ with probability 1 with a winning hand and bets $n$ with probability $\frac{n}{1+n}$ with a losing hand, and for all $x \in (0,n]$ player 2 calls vs. a bet of size $x$ with probability in the interval $\left[\frac{1}{1+x}, \min \left\{\frac{n}{x(1+n)},1\right\} \right].$
\end{proposition}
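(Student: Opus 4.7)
The plan is to prove the equivalence in both directions. Throughout, write $W(x), L(x)$ for P1's probability of betting $x$ with a winning/losing hand, $c(x)$ for P2's calling probability at bet $x$, and $\alpha(x) = W(x)/(W(x)+L(x))$ for the posterior that P1 holds the winning hand given the bet (whenever this is defined). The relevant payoffs are: P1 with W earns $\tfrac12+xc(x)$ from a bet of size $x$ and $\tfrac12$ from a check; P1 with L earns $\tfrac12-c(x)(1+x)$ from a bet and $-\tfrac12$ from a check; and P2 is indifferent between call and fold exactly when $\alpha(x)=\tfrac{1+x}{1+2x}$.

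Sufficiency is a direct verification. For P1 with W, the assumed upper bound $c(x)\le \min\{\tfrac{n}{x(1+n)},1\}$ gives $xc(x)\le \tfrac{n}{1+n}=nc(n)$, so betting $n$ is optimal. For P1 with L, the lower bound $c(x)\ge \tfrac{1}{1+x}$ gives bet-payoff $\tfrac12-c(x)(1+x)\le -\tfrac12$, matching the check-payoff with equality precisely at $x=n$; thus the prescribed mixing between check and bet $n$ is optimal. For P2, the unique on-path bet is $n$, at which $\alpha(n)=\tfrac{1+n}{1+2n}$ sits exactly at the call-indifference threshold, so any $c(n)\in[0,1]$ is a best response; off-path bets are never observed and impose no constraint.

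The necessity direction is where the work lies, and I expect the main obstacle to be ruling out equilibria in which P1 always bets with a losing hand, i.e.\ showing $L(0)>0$. I would argue by contradiction: if $L(0)=0$, then L-indifference on the support $S_L$ forces $c(x)(1+x)=1-v$ for a common $v\in(0,1)$, so $c(x)>0$ there, hence $M=\max_y yc(y)>0$ and P1 never checks with W either. Any $x$ with $W(x)>0$ must also have $L(x)>0$ (else $\alpha(x)=1$ triggers $c(x)=0$, contradicting $xc(x)=M>0$), and symmetrically any $x$ with $L(x)>0$ must have $W(x)>0$ (else $\alpha(x)=0$ triggers $c(x)=1$, contradicting $c(x)(1+x)=1-v<1$). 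So the two supports coincide, and the simultaneous identities $c(x)=\tfrac{1-v}{1+x}$ and $xc(x)=M$ admit the unique solution $y_0=\tfrac{M}{1-v-M}$. Then $W(y_0)=L(y_0)=1$ forces $\alpha(y_0)=\tfrac12<\tfrac{1+y_0}{1+2y_0}$, so P2 strictly prefers to call and $c(y_0)=1$; but $c(y_0)(1+y_0)=1+y_0>1-v$ contradicts the L-indifference identity.

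Once $L(0)>0$ is in hand the remainder follows cleanly. L-indifference between checking and any bet yields $c(x)\ge \tfrac{1}{1+x}$ on $(0,n]$. Any $x$ with $W(x)>0$ again forces $L(x)>0$ (else $\alpha(x)=1$ and $c(x)=0$ contradicts the lower bound), and L-indifference at such $x$ then pins $c(x)=\tfrac{1}{1+x}$, giving $xc(x)=\tfrac{x}{1+x}$. Since $nc(n)\ge \tfrac{n}{1+n}$ and every W-support point contributes $\tfrac{x}{1+x}\le \tfrac{n}{1+n}$ with equality only at $x=n$, we obtain $W(n)=1$ and $c(n)=\tfrac{1}{1+n}$. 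P2's indifference at $n$ gives $\alpha(n)=\tfrac{1+n}{1+2n}$, which with $W(n)=1$ solves to $L(n)=\tfrac{n}{1+n}$. Finally, for $x\in(0,n)$, the W-best-response condition $xc(x)\le \tfrac{n}{1+n}$ combined with $c(x)\le 1$ and the lower bound yields exactly the claimed interval.
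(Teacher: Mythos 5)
Your proof is correct in outline and in fact more self-contained than the paper's: the paper's proof of this proposition simply invokes its earlier (sketched) claim that player 1's equilibrium strategy is unique, and then derives the interval for player 2 by checking the two deviations you also use --- calling below $\frac{1}{1+x}$ invites a bluff of size $x$, calling above $\min\{\frac{n}{x(1+n)},1\}$ invites a value bet of size $x$ instead of $n$. What you add is a genuine support analysis establishing uniqueness of player 1's strategy ($L(0)>0$, then $W(n)=1$, $c(n)=\frac{1}{1+n}$, $L(n)=\frac{n}{1+n}$), which the paper never actually proves within this proposition; that is the harder half of the ``only if'' direction, so your route buys a complete argument where the paper leans on a sketch.

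There is, however, one unjustified step: in the $L(0)=0$ contradiction you assert $v\in(0,1)$ --- equivalently $c(x)>0$ on $S_L$ --- and the whole chain ($M>0$, W never checks, supports coincide) hangs on it. The case $c\equiv 0$ on $S_L$ must be excluded explicitly: if P2 folds every bluff size with probability one, then at each such on-path $x$ folding must be a weak best response, forcing $\alpha(x)\ge\frac{1+x}{1+2x}$, i.e.\ $W(x)\ge L(x)\frac{1+x}{x}>L(x)$; summing over $S_L$ and using $L(0)=0$ gives $1\ge\sum_{x\in S_L}W(x)>\sum_{x\in S_L}L(x)=1$, a contradiction. (The other endpoint needs no patch: $v>0$ is never really used, since $1-v\le 1<1+x$ suffices at both places you invoke it.) A second, smaller omission: to pin the losing hand's strategy down to ``bet $n$ or check'' you should also note $L(x)=0$ for $x\in(0,n)$ --- otherwise such a bet is on path with $\alpha(x)=0$, P2 must call with probability one, and the bet earns $\tfrac12-(1+x)<-\tfrac12$, so it cannot lie in the losing hand's support, whose payoff is $-\tfrac12$. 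With these two additions your proof is complete.
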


\begin{proof}
We have already argued that the Nash equilibrium strategy for player 1 is unique. Note that for $x \in (0,n)$, player 1 will never actually bet $x$ in equilibrium, and player 2 must call with probability that ensures that player 1 will not want to deviate and bet $x$. Suppose player 2 calls with probability $p' < \frac{1}{1+x}$, for some $x \in (0,n).$ If player 1 bets $x$ with a losing hand, expected payoff is 
\begin{eqnarray*}
&&p'(-0.5-x) + (1-p')(0.5) \\
&= &p'(-1 - x) + 0.5\\
&> &\frac{-1 - x}{1+x} + 0.5\\
&= &-0.5\\
\end{eqnarray*}
If instead player 1 checks with a losing hand, the expected payoff is -0.5. So player 1 will strictly prefer to bet $x$ than to check, and will have incentive to deviate from his equilibrium strategy.

Now suppose player 2 calls with probability $p' > \min \left\{\frac{n}{x(1+n)},1\right\}$, for some $x \in (0,n).$ First suppose that $\frac{n}{x(1+n)} \leq 1.$ If player 2 bets $x$ with a winning hand, his expected payoff is 
\begin{eqnarray*}
&&p'(0.5+x) + (1-p')(0.5) \\
& = &0.5p' + p'x + 0.5 - 0.5p'\\
& > &\frac{xn}{x(1+n)} + 0.5\\
& = &\frac{1.5n + 0.5}{1+n}\\
\end{eqnarray*}
If instead player 1 bets $n$, the expected payoff is
\begin{eqnarray*}
&&\frac{1}{1+n}(0.5+n) + \left(1-\frac{1}{1+n}\right)(0.5) \\
& = &\frac{1.5n + 0.5}{1+n} \\
\end{eqnarray*}
So player 1 will strictly prefer to bet $x$ than to bet $n$, and will have incentive to deviate from his equilibrium strategy.

Our analysis so far has shown that for $\frac{n}{x(1+n)} \leq 1,$ the strategy profile is a Nash equilibrium if and only if player 2 calls a bet of size $x$ with probability in the interval $\left[\frac{1}{1+x}, \frac{n}{x(1+n)} \right]$ for all $x \in (0,n].$

Now suppose that $\frac{n}{x(1+n)} > 1.$ Suppose that player 2 calls a bet of $x$ with probability 1. If player 1 bets $x$, his expected payoff is
\begin{eqnarray*}
& &1(0.5+x) + 0(0.5) \\
& = &x + 0.5 \\
& < &\frac{n}{1+n} + 0.5 \\
& = &\frac{1.5n + 0.5}{1+n}\\
\end{eqnarray*}
If instead player 1 bets $n$, the expected payoff as shown above is $\frac{1.5n + 0.5}{1+n}.$
So player 1 will not have incentive to deviate and bet $x$ instead of $n.$ 
\end{proof}

So we have shown that player 2 has infinitely many Nash equilibrium strategies that differ in their frequencies of calling vs. ``suboptimal'' bet sizes of player 1. Which of these strategies should we play when we encounter an opponent who bets a suboptimal size? One argument for calling with probability at the lower bound of the interval---$\frac{1}{1+x}$---is as follows (note that the previously-computed equilibrium strategy uses this value~\cite{Ankenman06:Mathematics}). If the opponent bets $x$ as opposed to the optimal  size of $n$ that he should bet in equilibrium, then a reasonable deduction is that he isn't even aware that $n$ would have been the optimal size, and believes that $x$ is optimal. Therefore, it would make sense to play a strategy that is an equilibrium in the game where the opponent is restricted to only betting $x$ (or to betting 0, i.e., checking). Doing so would correspond to calling a bet of $x$ with probability $\frac{1}{1+x}.$ The other equilibria pay more heed to the concern that the opponent could exploit us by deviating to bet $x$ instead of $n$; but we need not be as concerned about this possibility, since a rational opponent who knew to bet $n$ would not bet $x$.

One could use a similar argument to defend calling with probability at the upper bound of the interval---$\min \left\{\frac{n}{x(1+n)},1\right\}$. If the opponent somehow knew that betting $n$ was part of an optimal strategy but did not know that checking was, then perhaps we should follow an equilibrium of the game where the opponent is restricted to only betting $x$ or $n$, in which case our calling frequency should focus on dissuading the opponent from betting $x$ with a winning hand instead of $n.$ 

The first argument seems much more natural than the second, as it seems much more reasonable that a human is aware they should check sometimes with weak hands, but may have trouble computing that $n$ is the optimal size and guess that it is $x.$ However, both arguments could be appropriate depending on assumptions about the reasoning process of the opponent. The entire point of Nash equilibrium as a prescriptive solution concept is that we do not have any additional information about the players' reasoning process, so will opt to assume that all players are fully rational. If any additional information is available---such as historical data (either from our specific opponents' play or from a larger population of players), observations of play from the current match, a prior distribution, or any other model of the reasoning mechanism of the opponents---then we should clearly utilize this information and not simply follow a Nash equilibrium. Without any such additional information, it does not seem clear whether we should call with the lower bound probability, upper bound probability, or a value in the middle of the interval. The point of the equilibrium refinements we have considered is exactly to help us select between equilibria in a theoretically principled way in the absence of any additional information that could be used to model the specific opponents. 

For the remainder of our analysis we will restrict our attention to the no-limit clairvoyance game with $n = 2,$ with extensive-form game tree given by Figure~\ref{fi:clairvoyance}.

\begin{figure}[!ht]
\centering
\includegraphics[scale = 0.6]{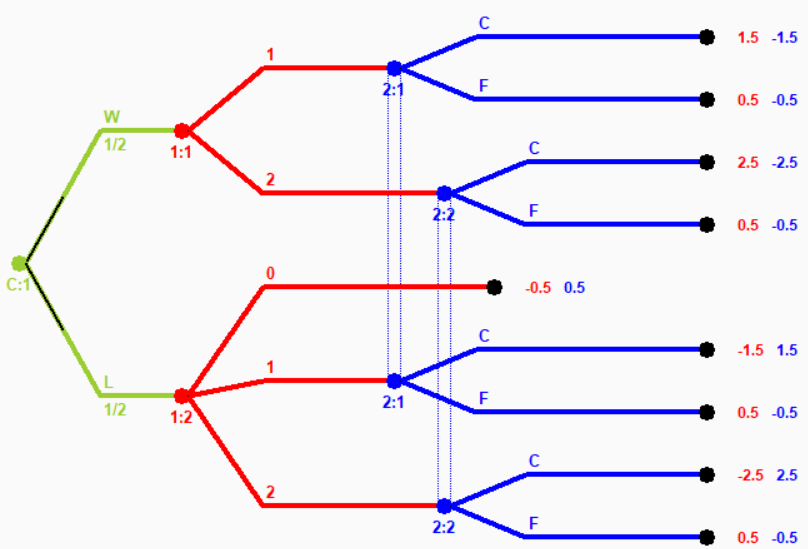}
\caption{No-limit clairvoyance game with $n = 2$.}
\label{fi:clairvoyance}
\end{figure}

According to our above analysis, the unique Nash equilibrium strategy for player 1 is to bet 2 with probability 1 with a winning hand, to bet 2 with probability $\frac{2}{3}$ with a losing hand, and to check with probability $\frac{1}{3}$ with a losing hand. The Nash equilibrium strategies for player 2 are to call a bet of 2 with probability $\frac{1}{3}$, and to call a bet of 1 with probability in the interval $\left[\frac{1}{2},\frac{2}{3}\right].$ As it turns out, the unique trembling-hand perfect equilibrium strategy for player 2 is to call vs. a bet of 1 with probability $\frac{2}{3}$.\footnote{Observe that this game explicitly shows that Theorem~\ref{th:thpe} does not hold in general for extensive-form games, since all of the Nash equilibria in this game satisfy the alternative formulation of trembling-hand perfect equilibrium. To see this, consider the sequence of strategies for player 1 that bet 1 with probability $\epsilon$ with a winning hand and with probability $\frac{\epsilon}{2}$ with a losing hand. This sequence will converge to the unique Nash equilibrium strategy for player 1 as $\epsilon \rightarrow 0$, and furthermore player 2 is indifferent between calling and folding vs. a bet of 1 against all of these strategies, so all of player 2's Nash equilibrium strategies are best responses. So the equivalent formulation of trembling-hand perfect equilibrium is only valid for simultaneous strategic-form games and does not apply to extensive-form games.} Since this is a one-step extensive-form imperfect-information game, this is also the unique quasi-perfect equilibrium. And since player 2's strategy is fully mixed, this is also the unique one-sided quasi-perfect equilibrium. However, the unique observable perfect equilibrium strategy for player 2 is to call with probability $\frac{5}{9}$. Interestingly, the OPE corresponds to a different strategy for this game than all the other refinements we have considered, and none of them correspond to the ``natural'' argument for calling with probability $\frac{1}{2}$ based on an assumption about the typical reasoning of human opponents. The OPE value of $\frac{5}{9}$ corresponds to the solution assuming only that player 1 has bet 1 but that otherwise all players are playing as rationally as possible. Note also that the OPE does not simply correspond to the average of the two interval boundaries, which would be $\frac{7}{12}.$ 

\begin{proposition}
In the no-limit clairvoyance game with $n = 2$, the unique extensive-form trembling-hand perfect equilibrium strategy for player 2 is to call vs. a bet of 1 with probability $\frac{2}{3}.$
\end{proposition}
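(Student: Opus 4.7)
The plan is to analyze the limits of equilibria of perturbed games in which every action at every information set must be played with probability at least $\epsilon > 0$, and let $\epsilon \to 0$. I will show that the limiting calling probability against a bet of $1$ is forced to equal $2/3$ by exploiting the indifference conditions that arise whenever a player puts strictly interior probability on more than one action in the perturbed game.

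First I would pin down player 2's perturbed calling probability against the full bet. By the preceding characterization of all Nash equilibria, any limit $\sigma^*$ of perturbed equilibria must have $p_{L,0}^* = 1/3$ and $p_{L,2}^* = 2/3$, both strictly positive. Hence for small $\epsilon$ these probabilities are bounded away from the tremble lower bound $\epsilon$, so P1 with hand L is mixing over check and bet 2 at interior probabilities in the perturbed game. This forces indifference between check (payoff $-0.5$) and bet 2 (payoff $0.5 - 3 c_2^{\epsilon}$), yielding $c_2^{\epsilon} = 1/3$.

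Next I would derive the main constraint on $c_1^{\epsilon}$. Every EFTHPE is a Nash equilibrium, so $c_1^* \in [1/2, 2/3] \subset (0,1)$, which means $c_1^{\epsilon}$ lies strictly between the tremble bounds $\epsilon$ and $1-\epsilon$ for all sufficiently small $\epsilon$. P2 therefore mixes call and fold after observing a bet of $1$, pinning the posterior belief to $q_1^{\epsilon} = p_{W,1}^{\epsilon}/(p_{W,1}^{\epsilon} + p_{L,1}^{\epsilon}) = 2/3$. Rearranging gives $p_{W,1}^{\epsilon} = 2 p_{L,1}^{\epsilon} \geq 2\epsilon$, strictly above the tremble minimum. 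So P1 with hand W plays bet 1 at an interior probability, and bet 1 must therefore tie for W's best response; comparing payoffs to bet 2 yields $0.5 + c_1^{\epsilon} = 0.5 + 2 c_2^{\epsilon} = 0.5 + 2/3$, so $c_1^{\epsilon} = 2/3$ for all small $\epsilon$, and passing to the limit gives $c_1^* = 2/3$.

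For existence I would verify directly that the explicit profile $(p_{W,0}^{\epsilon}, p_{W,1}^{\epsilon}, p_{W,2}^{\epsilon}) = (\epsilon, 2\epsilon, 1-3\epsilon)$, $(p_{L,0}^{\epsilon}, p_{L,1}^{\epsilon}, p_{L,2}^{\epsilon}) = (1/3 + \epsilon, \epsilon, 2/3 - 2\epsilon)$, $(c_1^{\epsilon}, c_2^{\epsilon}) = (2/3, 1/3)$ is a Nash equilibrium of the $\epsilon$-perturbed game for all small $\epsilon$: each action exceeds $\epsilon$, each player's mixture is supported on best responses, and the two posterior beliefs are exactly $2/3$ and $3/5$, matching the indifference thresholds. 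The profile converges to the claimed EFTHPE. The hard part is showing that W is forced to put interior probability on bet 1 rather than only the minimum tremble; the clean resolution is the inequality $p_{W,1}^{\epsilon} \geq 2 p_{L,1}^{\epsilon} \geq 2\epsilon > \epsilon$, which excludes the entire open Nash interval $(1/2, 2/3)$ for $c_1^*$ in one stroke, without any case analysis on how L's indifference resolves.
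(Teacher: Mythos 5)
Your proof is correct (within the same uniform-$\epsilon$ perturbation model the paper itself uses), but your uniqueness argument takes a genuinely different and in fact tighter route than the paper's. The paper fixes player 2's strategy at each candidate $\mu_\alpha$ with $\alpha\in[\tfrac12,\tfrac23)$, computes player 1's constrained best response in $G_\epsilon$ (minimum trembles on bet 1 with both hands), and shows player 2 would then strictly prefer to deviate; this is a case analysis over $\alpha$ and, as written, only rules out perturbed equilibria in which player 2 plays \emph{exactly} $\mu_\alpha$, leaving implicit the step that no sequence of perturbed equilibria merely \emph{converging} to $\mu_\alpha$ can exist. You instead take an arbitrary convergent sequence of perturbed equilibria and repeatedly use the fact that any action played strictly above its $\epsilon$-bound must be conditionally optimal: the limit being Nash pins player 1's limit strategy, so L mixes check and bet 2 interiorly, forcing $c_2^\epsilon=\tfrac13$; interiority of $c_1^\epsilon$ forces the posterior after a bet of 1 to be $\tfrac23$, hence $p^\epsilon_{W,1}=2p^\epsilon_{L,1}\ge 2\epsilon>\epsilon$, so bet 1 is optimal for W and $c_1^\epsilon$ is pinned at $\tfrac23$ --- excluding all $\alpha<\tfrac23$ (including the endpoint $\tfrac12$) in one stroke. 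This handles all nearby perturbed equilibria simultaneously and so closes the continuity gap the paper glosses over. One small patch: to get the exact equality $0.5+c_1^\epsilon=0.5+2c_2^\epsilon$ you should note that bet 2 is also played above its bound (its probability tends to $1$), or simply settle for the inequality $c_1^\epsilon\ge\tfrac23$, which together with the Nash bound $c_1^*\le\tfrac23$ already forces the limit. Your existence step mirrors the paper's (an explicit perturbed equilibrium sequence converging to the claimed profile), though your player-1 strategy trembles on every action, including checking with W, and you verify optimality via the posteriors $\tfrac23$ and $\tfrac35$ rather than the paper's explicit payoff computations; this is if anything cleaner, since the paper's constructed $\sigma^*_{\epsilon,1}$ omits the required $\epsilon$ tremble on checking with a winning hand.
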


\begin{proof}
Let \(\mu\) denote the Nash equilibrium strategy profile where player 2 calls
vs. a bet of \(1\) with probability \(\frac{2}{3}\). Consider the game
\(G_{\epsilon}\) where each action probability must be at least \(\epsilon\).
Assume \(\epsilon < \frac{1}{3}\). Consider the following strategy
\(\sigma^*_{\epsilon,1}\) for player 1. With a winning hand, player 1 bets
\(2\) with probability \(1-3\epsilon\), bets \(1\) with probability
\(2\epsilon\), and checks with probability \(\epsilon\). With a losing hand,
player 1 bets \(2\) with probability \(\frac{2}{3}-2\epsilon\), bets \(1\)
with probability \(\epsilon\), and checks with remaining probability
\[
1-\left(\frac{2}{3}-2\epsilon\right)-\epsilon
=
\frac{1}{3}+\epsilon.
\]

We first show that the strategy profile where player 1 follows
\(\sigma^*_{\epsilon,1}\) and player 2 follows \(\mu\) is a Nash equilibrium
of \(G_{\epsilon}\). Against \(\mu\), with a winning hand player 1 obtains
payoff
\[
\frac{1}{3}(2.5)+\frac{2}{3}(0.5)=\frac{7}{6}
\]
from betting \(2\), payoff
\[
\frac{2}{3}(1.5)+\frac{1}{3}(0.5)=\frac{7}{6}
\]
from betting \(1\), and payoff \(0.5\) from checking. Thus, subject to the
\(\epsilon\)-lower bounds, player 1 is best responding with a winning hand by
putting the minimum probability \(\epsilon\) on checking and the remaining
probability on bets \(1\) and \(2\).

With a losing hand, player 1 obtains payoff
\[
\frac{1}{3}(-2.5)+\frac{2}{3}(0.5)=-0.5
\]
from betting \(2\), payoff
\[
\frac{2}{3}(-1.5)+\frac{1}{3}(0.5)=-\frac{5}{6}
\]
from betting \(1\), and payoff \(-0.5\) from checking. Thus, subject to the
\(\epsilon\)-lower bounds, player 1 is best responding with a losing hand by
putting the minimum probability \(\epsilon\) on betting \(1\) and the remaining
probability on betting \(2\) and checking. Therefore player 1 cannot profitably
deviate from \(\sigma^*_{\epsilon,1}\).

Now consider player 2's best response. After observing a bet of \(1\), player
2's posterior probability that player 1 has a winning hand is
\[
\frac{2\epsilon}{2\epsilon+\epsilon}=\frac{2}{3}.
\]
So player 2's payoff from calling is
\[
\frac{2}{3}(-1.5)+\frac{1}{3}(1.5)=-0.5,
\]
which is equal to the payoff from folding. Thus player 2 is indifferent between
calling and folding vs. a bet of \(1\), and can call with probability
\(\frac{2}{3}\).

After observing a bet of \(2\), player 2's posterior probability that player 1
has a winning hand is
\[
\frac{1-3\epsilon}{(1-3\epsilon)+\left(\frac{2}{3}-2\epsilon\right)}
=
\frac{3}{5},
\]
since \(\frac{2}{3}-2\epsilon=\frac{2}{3}(1-3\epsilon)\). So player 2's payoff
from calling is
\[
\frac{3}{5}(-2.5)+\frac{2}{5}(2.5)=-0.5,
\]
which is again equal to the payoff from folding. Thus player 2 is also
indifferent between calling and folding vs. a bet of \(2\), and can call with
probability \(\frac{1}{3}\). Therefore the strategy profile where player 1
follows \(\sigma^*_{\epsilon,1}\) and player 2 follows \(\mu\) is a Nash
equilibrium of \(G_{\epsilon}\). By taking the limit as
\(\epsilon \rightarrow 0\), it follows that \(\mu\) is a trembling-hand perfect
equilibrium of the original game \(G\).

Now let \(\mu_{\alpha}\) denote the Nash equilibrium strategy profile where
player 2 calls vs. a bet of \(1\) with probability \(\alpha\), where
\(\frac{1}{2} \leq \alpha < \frac{2}{3}\). Suppose, toward a contradiction,
that \(\mu_{\alpha}\) is trembling-hand perfect. Then there exists a sequence
\(\epsilon_k \rightarrow 0\) and a sequence of Nash equilibria of
\(G_{\epsilon_k}\) converging to \(\mu_{\alpha}\). Let \(\alpha_k\) denote
player 2's probability of calling vs. a bet of \(1\) in the equilibrium of
\(G_{\epsilon_k}\). Then \(\alpha_k \rightarrow \alpha\).

For sufficiently large \(k\), we have
\[
\alpha_k < \frac{2}{3}
\]
and
\[
\alpha_k < 1-\epsilon_k.
\]
Against a player-2 strategy that calls vs. a bet of \(1\) with probability
\(\alpha_k\), player 1's payoff with a winning hand from betting \(2\) is
\[
\frac{1}{3}(2.5)+\frac{2}{3}(0.5)=\frac{7}{6},
\]
while player 1's payoff with a winning hand from betting \(1\) is
\[
\alpha_k(1.5)+(1-\alpha_k)(0.5)=\alpha_k+\frac{1}{2}
<
\frac{7}{6}.
\]
Therefore, in any best response in \(G_{\epsilon_k}\), player 1 must put the
minimum possible probability on betting \(1\) with a winning hand:
\[
\phi(W,1)=\epsilon_k.
\]
Also, by feasibility in \(G_{\epsilon_k}\),
\[
\phi(L,1)\geq \epsilon_k.
\]

Now consider player 2's decision after observing a bet of \(1\). Against such a
player-1 best response, the difference between player 2's payoff from calling
and folding, multiplied by the positive normalizing denominator
\(\phi(W,1)+\phi(L,1)\), is
\[
\phi(W,1)(-1.5+0.5)+\phi(L,1)(1.5+0.5)
=
-\phi(W,1)+2\phi(L,1).
\]
Using \(\phi(W,1)=\epsilon_k\) and \(\phi(L,1)\geq \epsilon_k\), this is at
least
\[
-\epsilon_k+2\epsilon_k=\epsilon_k>0.
\]
Thus player 2 strictly prefers calling to folding after observing a bet of
\(1\). Since \(\alpha_k<1-\epsilon_k\), player 2 can profitably deviate by
increasing his probability of calling vs. a bet of \(1\). This contradicts the
assumption that the strategy profile is a Nash equilibrium of \(G_{\epsilon_k}\).

Therefore no sequence of Nash equilibria of the perturbed games can converge to
\(\mu_{\alpha}\) for any \(\alpha<\frac{2}{3}\). Hence the unique
extensive-form trembling-hand perfect equilibrium strategy for player 2 is to
call vs. a bet of \(1\) with probability \(\frac{2}{3}\).
\end{proof}

\begin{corollary}
In the no-limit clairvoyance game with $n = 2$, the unique quasi-perfect equilibrium strategy for player 2 is to call vs. a bet of 1 with probability $\frac{2}{3}.$ This is also the unique one-sided quasi-perfect equilibrium strategy for player 2.
\end{corollary}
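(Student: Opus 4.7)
The plan is to leverage the preceding proposition (uniqueness of EFTHPE) and obtain both statements almost for free by matching the trembling constraints of QPE and OSQPE against those of EFTHPE in this one-step setting.

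For the QPE claim, I would simply invoke the observation made earlier in the paper: in two-player one-step extensive-form imperfect-information games, QPE and EFTHPE coincide, since each player takes only a single action along the path of play and there are no ``future'' actions at which the two concepts can disagree about who trembles. Thus the unique EFTHPE strategy $\mu$ (calling a bet of 1 with probability $\tfrac{2}{3}$) is also the unique QPE strategy, and no further work is needed.

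For the OSQPE claim I would reuse both halves of the EFTHPE proof verbatim, checking that neither direction actually required player 2 to tremble. First, to see that $\mu$ is OSQPE, I would exhibit the same sequence $\sigma^*_{\epsilon,1}$ from the previous proof: it is totally mixed for player 1 (so it meets the OSQPE requirement that the ``human'' player trembles on every action), and player 2's strategy $\mu$ is a best response to it, so $(\sigma^*_{\epsilon,1},\mu)$ is a Nash equilibrium of the OSQPE-perturbed game and converges to $(\sigma^*_1,\mu)$ as $\epsilon \to 0$. Second, to rule out every $\mu_\alpha$ with $\alpha\in[\tfrac{1}{2},\tfrac{2}{3})$, I would re-run the same best-response computation: if player 2 plays $\mu_\alpha$ and player 1 is constrained only by $\phi \ge \epsilon$ on all sequences, then the coefficient of $\phi(W,1)$ in player 1's expected payoff is $\alpha-\tfrac{2}{3}<0$, forcing $\phi(W,1)=\epsilon$; against such a player~1, player 2's expected payoff of calling a bet of 1 is $\epsilon(-1.5)+\phi(L,1)(1.5)\ge 0>-0.5$, the expected payoff of folding, so player 2 strictly prefers to call with probability 1 and cannot be playing $\mu_\alpha$ in equilibrium. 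Crucially, this argument never needed player 2 to tremble, so it applies equally to OSQPE.

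The only conceptual obstacle is making sure the OSQPE definition is compatible with the above shortcut. Concretely, I need to confirm that the Nash equilibria of the perturbed games in the OSQPE definition (player~1 trembles, player~2 unconstrained) are a superset of those in the EFTHPE definition, which means OSQPE is a priori a \emph{weaker} refinement than EFTHPE and could admit more equilibria. The second half of the argument above is therefore the substantive step: it shows that despite OSQPE being weaker, the same deviation by player 2 still rules out every $\mu_\alpha$ with $\alpha<\tfrac{2}{3}$, so no extra equilibria are gained. Once that is verified, uniqueness transfers from EFTHPE to both QPE and OSQPE.
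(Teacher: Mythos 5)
Your proposal is correct and follows essentially the paper's own route: the paper derives this corollary from exactly the two observations you use---QPE coincides with EFTHPE in one-step games, and the player-2 tremble constraints are immaterial here since player 2's EFTHPE strategy is fully mixed and the exclusion argument never relies on player 2 trembling---so uniqueness transfers from the preceding proposition. Your explicit check that relaxing player 2's constraints adds no new perturbed-game equilibria merely spells out what the paper states in one line (note only that $\sigma^*_{\epsilon,1}$ as written puts zero probability on checking with a winning hand, so it is not literally totally mixed; this is a slip inherited from the paper's EFTHPE proof and is easily patched by forcing probability $\epsilon$ on that action as well).
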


\begin{proposition}
In the no-limit clairvoyance game with $n = 2$, the unique observable perfect equilibrium strategy for player 2 is to call vs. a bet of 1 with probability $\frac{5}{9}.$
\end{proposition}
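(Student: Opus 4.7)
The plan is to mirror the preceding trembling-hand perfect equilibrium proof, but with the OPE perturbation: construct the perturbed game $G_\epsilon$ in which player 1 must satisfy $\sigma_1(W,1)+\sigma_1(L,1)\geq\epsilon$, compute its Nash equilibria explicitly, and take $\epsilon\to 0^+$. Let $p_W(x), p_L(x)$ denote player 1's probability of betting $x\in\{0,1,2\}$ given a winning or losing hand, and let $q(x)$ denote player 2's probability of calling a bet of size $x$.

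I would first argue that in any Nash equilibrium of $G_\epsilon$ the OPE constraint is tight ($p_W(1)+p_L(1)=\epsilon$), since every unperturbed Nash equilibrium has $p_W(1)=p_L(1)=0$ and player 1 strictly prefers smaller mass on bet 1. Next I would rule out the boundary cases by contradiction: if $p_L(1)=0$ then player 2's Bayes posterior on a bet of 1 is concentrated on W, forcing $q(1)=0$ as the unique best response, but then player 1 with L strictly prefers betting 1 (EV $0.5$) to checking (EV $-0.5$); the symmetric argument rules out $p_W(1)=0$. Analogous case analysis on the remainder of player 1's strategy rules out degenerate support patterns (for example, $p_L(2)=0$ would force $q(2)=0$, after which L strictly prefers betting 2 to checking). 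So in the perturbed equilibrium $p_W(1), p_L(1), p_L(0), p_L(2), p_W(2)>0$ (and $p_W(0)=0$).

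Under this support pattern I would apply the KKT conditions for player 1's constrained maximization, using a single Lagrange multiplier $\mu\geq 0$ for the active constraint. The W-side stationarity gives $\mu=\tfrac{1}{2}(V_2^W-V_1^W)=\tfrac{1}{2}(2q(2)-q(1))$; on the L-side, having both $p_L(0)$ and $p_L(2)$ positive forces indifference between checking and betting 2, namely $q(2)=1/3$, and stationarity in $p_L(1)$ then yields $\mu=\tfrac{1}{2}(V_0^L-V_1^L)=\tfrac{1}{2}(2q(1)-1)$. Equating the two expressions for $\mu$ with $q(2)=1/3$ gives $3q(1)=2q(2)+1=5/3$, so $q(1)=5/9$. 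Player 2's indifference against bet 1 then forces $p_L(1)=p_W(1)/2$, which together with $p_W(1)+p_L(1)=\epsilon$ determines $(p_W(1),p_L(1))=(2\epsilon/3,\epsilon/3)$; the remaining probabilities $p_W(2)=1-2\epsilon/3$, $p_L(2)=2/3-4\epsilon/9$, $p_L(0)=1/3+\epsilon/9$ are all strictly positive for small $\epsilon$, confirming a valid perturbed Nash equilibrium.

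The main obstacle is the enumeration of possible support patterns for player 1, because the clean Lagrangian computation delivers $q(1)=5/9$ only under the ``generic'' support pattern identified above; each alternative pattern must be ruled out by a direct best-response contradiction as sketched. Once uniqueness of the perturbed Nash equilibrium is established for all sufficiently small $\epsilon>0$, passing to the limit $\epsilon\to 0^+$ yields the unique observable perfect equilibrium strategy, in which player 2 calls a bet of 1 with probability $5/9$.
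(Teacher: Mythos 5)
Your proposal is sound and lands on the same perturbed game $G_\epsilon$ (aggregate constraint $p_W(1)+p_L(1)\geq\epsilon$) and the same perturbed equilibrium as the paper: $p_W(1)=2\epsilon/3$, $p_L(1)=\epsilon/3$, $q(2)=1/3$, $q(1)=5/9$, with the tremble split $p_L(1)=p_W(1)/2$ forced by player 2's indifference at the bet-of-1 node. Where you differ is in how uniqueness is handled. The paper never characterizes all equilibria of $G_\epsilon$; it verifies directly that the profile with $q(1)=5/9$ extends to a Nash equilibrium of $G_\epsilon$, and then, for each other candidate calling probability $\alpha\in[1/2,2/3]$ (the only candidates, since OPE is a refinement of Nash and player 2's Nash strategies are exactly these), it computes player 1's constrained best response to $\mu_\alpha$ in $G_\epsilon$ — which dumps the tremble mass entirely on $(W,1)$ or entirely on $(L,1)$ depending on the sign of $\alpha-\tfrac{2}{3}$ versus $1-2\alpha$ — and observes that player 2 then strictly prefers to deviate, so no perturbed equilibrium supports $\mu_\alpha$. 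Your route instead proves uniqueness of the Nash equilibrium of $G_\epsilon$ itself via support enumeration and KKT stationarity with the single multiplier $\mu$; this is a stronger intermediate statement and arguably cleaner for passing to the limit, at the cost of a genuine support-pattern case analysis that the paper's candidate-by-candidate argument avoids.

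Two steps in your sketch need tightening before the case analysis closes. First, the tightness claim: ``player 1 strictly prefers smaller mass on bet 1'' is not available before you know $q$; the clean argument is that if the constraint were slack, player 1's strategy would be a best response in the unperturbed game as well (any profitable deviation in $G$ can be approached by a feasible convex step), yielding a Nash equilibrium of $G$ with positive mass on bet 1, contradicting the uniqueness of player 1's Nash strategy. Second, implications like ``$p_L(2)=0$ forces $q(2)=0$'' hold only when $p_W(2)>0$; in $G_\epsilon$ only the bet-of-1 information set is guaranteed to be reached, so you must separately dispose of patterns in which the bet-of-2 node is unreached and $q(2)$ is an arbitrary best response there (e.g., ruling out $p_W(1)=0$ requires noting that $q(2)\geq 1/2$ is only sustainable with $p_W(2)=p_L(2)=0$, after which $q(1)=1$ makes betting 1 with $W$ strictly dominate checking, a contradiction). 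These are repairable with finitely many such best-response contradictions, so the proposal is correct in substance; it just underestimates how much of the work sits in that enumeration.
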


\begin{proof}
Let $\mu$ denote the Nash equilibrium strategy profile where player 2 calls vs. a bet of 1 with probability $\frac{5}{9}.$ Consider the game $G_{\epsilon}$ where the sum of the probability that player 1 bets 1 with a winning hand and with a losing hand is at least $\epsilon.$ Consider the following strategy $\sigma^*_{\epsilon,1}$ for player 1. With a winning hand player 1 bets 2 with probability $1-\frac{2\epsilon}{3}$, and bets 1 with probability $\frac{2\epsilon}{3}$. With a losing hand player 1 bets 2 with probability $\frac{2(3-2\epsilon)}{9}$, bets 1 with probability $\frac{\epsilon}{3}$ and bets 0 with remaining probability $1 -  \frac{2(3-2\epsilon)}{9} - \frac{\epsilon}{3} = \frac{3+\epsilon}{9}.$ Consider whether player 1 can profitably deviate to strategy $\phi$ from this strategy when player 2 follows $\mu.$ First, we calculate expected payoff for player 1 of playing $\sigma^*_{\epsilon,1}$ against $\mu$:

\begin{eqnarray*}
&&\frac{1}{2}\left(\left(1-\frac{2\epsilon}{3}\right)\left(\frac{1}{3} (2.5) + \frac{2}{3}(0.5)\right) + \left(\frac{2 \epsilon}{3}\right)\left(\frac{5}{9}(1.5) + \frac{4}{9}(0.5)\right)\right)\\
&+ &\frac{1}{2}\left(\frac{2(3-2\epsilon)}{9}\left(\frac{1}{3}(-2.5) + \frac{2}{3}(0.5)\right) + \frac{\epsilon}{3} \left(\frac{5}{9}(-1.5) + \frac{4}{9}(0.5)\right) + \frac{3+\epsilon}{9} (-0.5)\right)\\
&= &\frac{1}{2}\left(\left(1-\frac{2\epsilon}{3}\right)\left(\frac{7}{6}\right) + \left(\frac{2 \epsilon}{3}\right)\left(\frac{19}{18}\right)\right)
+ \frac{1}{2}\left(\frac{2(3-2\epsilon)}{9}\left(-0.5\right) + \frac{\epsilon}{3} \left(-\frac{11}{18}\right) + \left(\frac{3+\epsilon}{9} (-0.5)\right)\right)\\
&= &\frac{1}{2}\left(-\frac{2\epsilon}{27} + \frac{7}{6}\right) + \frac{1}{2}\left(-\frac{1}{3} + \frac{2\epsilon}{9} - \frac{11 \epsilon}{54} - \frac{1}{6} - \frac{\epsilon}{18}\right)\\
&= &\frac{1}{3} -\frac{\epsilon}{18}\\
\end{eqnarray*}

If instead player 1 plays $\phi$, expected payoff is

\begin{eqnarray*}
&&\frac{1}{2}\left((1-\phi(W,1))\left(\frac{7}{6}\right) + \phi(W,1)\left(\frac{19}{18}\right)\right)\\ 
&+ &\frac{1}{2}\left(\phi(L,2)\left(-0.5\right) + \phi(L,1) \left(-\frac{11}{18}\right) + (1-\phi(L,2)-\phi(L,1)) (-0.5)\right)\\
&= &\frac{1}{2}\left(\frac{7}{6} - \frac{\phi(W,1)}{9}\right) + \frac{1}{2} \left(-0.5 - \frac{\phi(L,1)}{9}\right)\\
&= &\frac{1}{3} - \frac{\phi(W,1)}{18} - \frac{\phi(L,1)}{18}\\
\end{eqnarray*}

Given the constraint that $\phi(W,1) + \phi(L,1) \geq \epsilon$, this is maximized at
$\phi(W,1) + \phi(L,1) = \epsilon$, producing expected payoff $\frac{1}{3} - \frac{\epsilon}{18}.$
Since this is the same expected payoff as playing $\sigma^*_{\epsilon,1}$, player 1 cannot profitably deviate from $\sigma^*_{\epsilon,1}$.

Now consider player 2's best response. After observing a bet of 1, player 2's posterior probability that player 1 has a winning hand is
\begin{equation*}
\frac{\frac{2\epsilon}{3}}{\frac{2\epsilon}{3}+\frac{\epsilon}{3}}
=
\frac{2}{3}.
\end{equation*}
So player 2's payoff from calling is
\begin{equation*}
\frac{2}{3}(-1.5)+\frac{1}{3}(1.5)=-0.5,
\end{equation*}
which is equal to the payoff from folding. Thus player 2 is indifferent between calling and folding vs. a bet of 1, and can call with probability $\frac{5}{9}$.

After observing a bet of 2, player 2's posterior probability that player 1 has a winning hand is
\begin{equation*}
\frac{1-\frac{2\epsilon}{3}}{\left(1-\frac{2\epsilon}{3}\right)+\frac{2(3-2\epsilon)}{9}}
=
\frac{3}{5}.
\end{equation*}
So player 2's payoff from calling is
\begin{equation*}
\frac{3}{5}(-2.5)+\frac{2}{5}(2.5)=-0.5,
\end{equation*}
which is again equal to the payoff from folding. Thus player 2 is also indifferent between calling and folding vs. a bet of 2, and can call with probability $\frac{1}{3}$.

So the strategy profile where player 1 follows $\sigma^*_{\epsilon,1}$ and player 2 follows $\mu$ is a Nash equilibrium of $G_{\epsilon}.$ By taking the limit as $\epsilon \rightarrow 0$ it follows that $\mu$ is an observable perfect equilibrium of the original game $G.$

Now let $\mu_{\alpha}$ denote the Nash equilibrium strategy profile where player 2 calls vs. a bet of 1 with probability $\alpha$, where $\frac{1}{2} \leq \alpha \leq \frac{2}{3}$ and $\alpha \neq \frac{5}{9}.$ Suppose, toward a contradiction, that $\mu_{\alpha}$ is observable perfect. Then there exists a sequence $\epsilon_k \rightarrow 0$ and a sequence of Nash equilibria of $G_{\epsilon_k}$ converging to $\mu_{\alpha}.$ Let $\alpha_k$ denote player 2's probability of calling vs. a bet of 1 in the equilibrium of $G_{\epsilon_k}.$ Then $\alpha_k \rightarrow \alpha.$

Against a player-2 strategy that calls vs. a bet of 1 with probability $\alpha_k$, player 1's expected payoff from a strategy $\phi$ is

\begin{eqnarray*}
&&\frac{1}{2}\left((1-\phi(W,1))\left(\frac{1}{3} (2.5) + \frac{2}{3}(0.5)\right) +  \phi(W,1)\left(
\alpha_k(1.5) + (1-\alpha_k)(0.5)\right)\right)\\ 
&+ &\frac{1}{2}\left(\phi(L,2)\left(\frac{1}{3}(-2.5) + \frac{2}{3}(0.5)\right) +  \phi(L,1) \left(\alpha_k(-1.5) + (1-\alpha_k)(0.5)\right)\right)\\ 
&+ &\frac{1}{2}\left((1-\phi(L,2)-\phi(L,1)) (-0.5)\right)\\
&= &\frac{1}{2} \left(\phi(W,1)\left(\alpha_k - \frac{2}{3}\right) + \phi(L,1)(-2\alpha_k +1) + \frac{2}{3}\right).
\end{eqnarray*}

We want to maximize this subject to
\begin{equation*}
\phi(W,1)+\phi(L,1)\geq \epsilon_k.
\end{equation*}

First suppose $\alpha < \frac{5}{9}.$ Then for all sufficiently large $k$, we have $\alpha_k < \frac{5}{9}.$ For such $k$, the coefficient on $\phi(L,1)$ is strictly larger than the coefficient on $\phi(W,1)$, since
\begin{equation*}
(-2\alpha_k+1)-\left(\alpha_k-\frac{2}{3}\right)
=
\frac{5}{3}-3\alpha_k
>
0.
\end{equation*}
Therefore, in any best response in \(G_{\epsilon_k}\), player 1 sets
\[
\phi(W,1)=0
\]
and chooses \(\phi(L,1)\) so that
\[
\phi(L,1)\ge \epsilon_k.
\]
Against such a player-1 strategy, after observing a bet of 1, player 2 strictly prefers calling to folding. Hence in any Nash equilibrium of $G_{\epsilon_k}$ player 2 must call vs. a bet of 1 with probability 1. This contradicts $\alpha_k \rightarrow \alpha < \frac{5}{9}.$

Now suppose $\alpha > \frac{5}{9}.$ Then for all sufficiently large $k$, we have $\alpha_k > \frac{5}{9}.$ For such $k$, the coefficient on $\phi(W,1)$ is strictly larger than the coefficient on $\phi(L,1)$, since
\begin{equation*}
\left(\alpha_k-\frac{2}{3}\right)-(-2\alpha_k+1)
=
3\alpha_k-\frac{5}{3}
>
0.
\end{equation*}
Therefore, in any best response in \(G_{\epsilon_k}\), player 1 sets
\[
\phi(L,1)=0
\]
and chooses \(\phi(W,1)\) so that
\[
\phi(W,1)\ge \epsilon_k.
\]
Against such a player-1 strategy, after observing a bet of 1, player 2 strictly prefers folding to calling. Hence in any Nash equilibrium of $G_{\epsilon_k}$ player 2 must call vs. a bet of 1 with probability 0. This contradicts $\alpha_k \rightarrow \alpha > \frac{5}{9}.$

Therefore no sequence of Nash equilibria of the perturbed games can converge to $\mu_{\alpha}$ for any $\alpha \neq \frac{5}{9}.$ Hence the unique observable perfect equilibrium strategy for player 2 is to call vs. a bet of 1 with probability $\frac{5}{9}$.
\end{proof}

The OPE solution can be motivated intuitively by the following reasoning. First, note that if player 1 bets the optimal size of 2 with a winning hand against an equilibrium strategy (which calls a bet of 2 with probability $\frac{1}{3}$), expected payoff is 
$$\frac{1}{3} (3) + \frac{2}{3} (1) = \frac{5}{3}.$$
If player 1 bets 1 with a winning hand against a strategy that calls with probability $\alpha$, expected payoff is
$$\alpha (2) + (1-\alpha) (1) = \alpha + 1.$$
The difference between these two values is
$$\frac{5}{3} - (\alpha + 1) = \frac{2}{3} - \alpha.$$

If player 1 bets 2 with a losing hand against an equilibrium strategy, expected payoff is
$$\frac{1}{3}(-2) + \frac{2}{3}(1) = 0.$$
Similarly, if player 1 bets 0 with a losing hand against an equilibrium strategy, expected payoff is 0.
If player 1 bets 1 with a losing hand against a strategy that calls with probability $\alpha$, expected payoff is
$$\alpha (-1) + (1-\alpha) (1) = 1 - 2 \alpha.$$
The difference between these two values is
$$0 - (1 - 2 \alpha) = 2 \alpha - 1$$

These two differences are equal iff
$$\frac{2}{3} - \alpha = 2 \alpha - 1$$
$$3 \alpha = \frac{5}{3}$$
$$\alpha = \frac{5}{9}.$$

So the OPE is the unique equilibrium where player 1 loses the same amount of expected payoff with both types of mistakes (betting 1 with a winning hand and 
betting 1 with a losing hand). In all other equilibria, player 1 loses more by making one type of mistake over the other. So in the setting where we have no information regarding the type of the mistake the opponent is making (i.e., we have a uniform prior distribution over the opponent's type of mistake), the OPE solution is the most sensible. 

\section{Conclusion}
\label{se:conc}
We presented a new solution concept for sequential imperfect-information games called observable perfect equilibrium that captures the assumption that all players are playing as rationally as possible given the fact that some players have taken observable suboptimal actions. We believe that this is more compelling than other solution concepts that assume that one or all players make certain types of mistakes for all other actions including those that have not been observed. We showed that every observable perfect equilibrium is a Nash equilibrium, which implies that observable perfect equilibrium is a refinement of Nash equilibrium. We also showed that observable perfect equilibrium is always guaranteed to exist. We showed that an OPE can be computed in polynomial time in two-player zero-sum games based on repeatedly solving a linear program formulation. We also argued that computation of OPE is more efficient than computation of the related concept of one-sided quasi-perfect equilibrium, which in turn has been shown to be more efficient than computation of quasi-perfect equilibrium and extensive-form trembling-hand perfect equilibrium.

We demonstrated that observable perfect equilibrium leads to a different solution in no-limit poker than EFTHPE, QPE, and OSQPE. While we only considered a simplified game called the no-limit clairvoyance game, this game encodes several elements of the complexity of full no-limit Texas hold 'em, and in fact conclusions from this game have been incorporated into some of the strongest agents for no-limit Texas hold 'em. So we expect our analysis to extend to significantly more complex settings than the example considered. 
We think that observable perfect equilibrium captures the theoretically-correct solution concept for sequential imperfect-information games, and furthermore is more practical to compute than other solutions.

Future work should explore the scalability of the algorithm we have presented. While algorithms based on solving linear programs run in polynomial time, they often run into memory and speed issues that prevent them from being competitive with algorithms such as counterfactual regret minimization and fictitious play on extremely large games. Perhaps these algorithms can be modified in such a way that they are guaranteed to converge to an observable perfect equilibrium in two-player zero-sum games (currently they are just guaranteed to converge to Nash equilibrium and not to any specific refinement).

While the focus of this paper has been on two-player games, and two-player zero-sum games specifically, the concept of observable perfect equilibrium is generally applicable to multiplayer games and non-zero-sum games as well. It has been argued that one-sided quasi-perfect equilibrium is inappropriate for non-zero-sum (and multi-player) games and should not be used~\cite{Farina21:Equilibrium}. By contrast, we see no reason why observable perfect equilibrium cannot be applied to these games with theoretical and computational advantages analogous to those in the two-player zero-sum setting. In the future we would like to study the applicability of observable perfect equilibrium to these settings and develop new algorithms for its computation.

\bibliographystyle{plain} 
\bibliography{C://FromBackup/Research/refs/dairefs}
\end{document}